\pdfoutput=1
\newif\ifPublic
\Publictrue
\ifPublic
\documentclass[sigplan,screen,nonacm=true,authorversion=true]{acmart}
\else
\documentclass[sigplan,screen,anonymous]{acmart}
\fi
\AtBeginDocument{%
  }

\usepackage{xspace}
\usepackage{tikz}
\usepackage{mathpartir}
\usepackage{cancel}
\usepackage{mathtools}
\usepackage{listings}
\newcommand{\power}[1]{\ensuremath{\mathbb{P}(#1)}}
\newcommand{\bind}{\mathbin{>\!\!>\mkern-5.9mu=}}

\newcommand{\seq}{\mathbin{>\!\!>}}

\setcopyright{cc}
\setcctype{by-nd}
\copyrightyear{2026}
\acmYear{2026}
\acmDOI{XXXXXXX.XXXXXXX}
\acmConference[Conference acronym 'XX]{Make sure to enter the correct
  conference title from your rights confirmation email}{June 03--05,
  2018}{Woodstock, NY}
\acmISBN{978-1-4503-XXXX-X/2018/06}

\begin{document}

\title{Effect Systems as Abstract Interpretations}

\author{Colin S. Gordon}
\email{csgordon@drexel.edu}
\affiliation{%
  \institution{Drexel University}
  \city{Philadelphia, PA}
  \country{USA}
}

\renewcommand{\shortauthors}{Gordon}

\begin{abstract}
Many forms of static reasoning about program behaviours are known in the literature, yet
formal relationships are studied surprisingly infrequently.
While most type systems are well-known to be captured by abstract interpretations,
the situation for type-and-effect systems is, in the general case, unsettled despite
strong hypotheses and occasional framing of effect systems as abstract interpretations.

We develop a formal relationship between abstract interpretations and a general class
of effect systems.
First, we describe an embedding of effect quantales into abstract domains.
Second, we recover the general form of an effect quantale as an abstract interpretation ---
not on states or values, but on event occurrences.
\end{abstract}

\maketitle

\section{Introduction}
For years, static type systems have long been viewed as effectively a kind of abstract interpretation~\cite{cousot1997types}.
Implicit in this view is the belief that effect systems, as a form of type system, also satisfy this claim. Over the years, a number of effect systems have been proposed which effectively perform abstract interpretations~\cite{Skalka2008,skalka2008types,holdermans2010polyvariant,ivaskovic2020dataflow}, sometimes explicitly formalizing effects as abstract domains~\cite{hofmann2014abstract,nicola2025abstract,skalka2020types,tang1994separate}.
But the general relationship remains unformalized --- there are no reconstructions of independently-proposed effect systems as abstract interpretations, nor general constructions of families of effect systems by way of abstract interpretation.
We resolve this here.
We describe a general construction of an abstract domain of trace properties from an arbitrary \emph{effect quantale}~\cite{gordon2021polymorphic}, meaning that many effect systems can be reconstructed as abstract interpretations.

This is of interest partly for the purpose of drawing a broader formal connection between two approaches to static analysis.
Effect quantales as a framework have, like abstract interpretation's domains, been used both to guide the design of new effect systems~\cite{saffrich2022relating,bao2021reachability},
and also as a basis for transforming one effect system into one with improved precision~\cite{gordon2023error} or additional constructs such as loops~\cite{gordon2021polymorphic} or continuations~\cite{gordon2020lifting}.
The relationship also raises questions about the relationship between abstract interpretation and what is now called the \emph{logical approach to type soundness}~\cite{timany2024logical}, where typing judgments denote judgements in a program logic (typically separation logic).
The earliest systematic version of this approach~\cite{dinsdale2013views}, used to prove type safety for a data-race-free C\# dialect~\cite{gordon2012uniqueness},
involved giving denotations of types as sets with additional properties, in a way highly reminiscent of Cousot's~\cite{cousot1997types} concretization of types as program properties.
Better understanding the relationship between effect systems and abstract interpretation may permit transfer of constructions in both directions, and may open new ways to study soundness of effect systems.

\section{Preliminaries}
\label{sec:prelim}
We assume familiarity with abstract interpretation based on the dominant Galois connection approach,
and use standard setups in what follows~\cite{cousot1979systematic,cousot1992abstract}.
We focus on recalling effect systems in general, and effect quantales in particular.

Effect systems have a lengthy history~\cite{gifford86,lucassen88} as a way to augment type systems to reason about side effects of interest in a program: which regions of memory are accessed~\cite{lucassen88,bocchino09}, which locks' data are accessed~\cite{objtyrace99,rccjava00}, which synchronization actions are performed~\cite{flanagan2003tldi,gordon2012static,suenaga2008type,flanagan2003atomicity}, and of course the classic checked exceptions~\cite{benton2007exceptions,gosling2014java}.
Most effect systems ignore program order, tracking only sets of things that could occur during execution. But even fairly early, the notion of an effect system was extended to deal with program order, initially tracking the order of messages sent in Concurrent ML~\cite{nielson1993cml,nielson1994constraints,amtoft1999}, and later other concurrency properties~\cite{flanagan2003atomicity,flanagan2003tldi,boyapati02,gordon2012static,suenaga2008type,ivaskovic2020graded},
general temporal properties~\cite{Koskinen14LTR,sekiyama2023temporal,hofmann2014abstract}, dataflow analysis~\cite{ivaskovic2020dataflow,holdermans2010polyvariant}, and more.

Effect systems that ignore program order typically have effects that form a bounded join semilattices~\cite{marino09} --- behaviours of any two program fragments are combined by over-approximating both with a join.
For order-aware effect systems, sometimes called flow-sensitive or \emph{sequential}~\cite{tate2013sequential}, a variety of approaches exist; see Gordon~\cite{gordon2021polymorphic} for a survey. However the most fully-developed framework is that of \emph{effect quantales}~\cite{gordon2017generic,gordon2021polymorphic}, which deal with program order and general approximation, but also have systematic approaches to deal with iteration~\cite{gordon2021polymorphic}, arbitrary control flow~\cite{gordon2020lifting}, and precise error reporting~\cite{gordon2023error}.

\begin{definition}[Effect Quantale]
An effect quantale is a carrier set $Q$ equipped with a partial binary join $\sqcup$ and a partial monoid $\rhd$ with unit $I$, which additionally satisfies the distributive laws
$x\rhd(y\sqcup z)=(x\rhd y)\sqcup(x\rhd z)$
and
$(x\sqcup y)\rhd z=(x\rhd z)\sqcup(y\rhd z)$
\end{definition}
We generally identify the algebraic structure with its carrier set when no ambiguity results.
$\sqcup$ is used to over-approximate unresolved finite branching (e.g., conditionals), while $\rhd$ is used to model program ordering (e.g., sequencing operations according to evaluation order). Note that a join semilattice is a (total) effect quantale when reusing the join as the sequencing operation.
The partial operations model early rejection of locally-invalid operations, such as repeated acquisition of a non-reentrant lock.
In every effect quantale, we obtain a partial order $x\sqsubseteq y \Leftrightarrow x\sqcup y = y$.

\subsection{A Simple Language with Effects}
Effect quantales allows us to extend type systems with effect tracking, and to use the different operators to capture or discard program evaluation ordering.
Figure \ref{fig:lang} shows a small call-by-value lambda calculus with non-deterministic branching.
The judgment $\Gamma \vdash e : \tau \mid \chi$ can be read as roughly ``under local variable typing assumptions $\Gamma$, terminating executions of $e$ will produce a value of type $\tau$ while causing side effects summarized by $\chi$.''
Notice that function types carry the \emph{latent effect} of the function, tracking the function body's behaviours which do not occur until it is invoked.
This is resolved in the type rule for function application, which sequences (via $\rhd$) the behaviours
of the left and right expressions in function application, followed by the behaviour of the function body.
Primitive behaviours are raised, for simplicity, by a construct $\mathsf{ev}(-)$ which raises a specific effect $\chi$.
Recursive functions can encode common looping constructs:
\looseness=-1
\[ \mathsf{while*}\;e \equiv (\mathsf{rec}_f x:\mathsf{unit}\ldotp \mathsf{if*}\;x\;(f\;x) )\;\mathsf{tt} \]

\begin{figure}
\begin{mathpar}
\inferrule[T-Unit]{ }{\Gamma\vdash \mathsf{tt} : \mathsf{unit} \mid I}
\and
\inferrule[T-Var]{\Gamma(x)=\tau}{\Gamma\vdash x : \tau \mid I}
\and
\inferrule[T-Lambda]{\Gamma,x:\tau\vdash e : \tau' \mid \chi}{\Gamma\vdash (\lambda x:\tau\ldotp e) : \tau\xrightarrow{\chi}\tau' \mid I}
\and
\inferrule*[left=T-App]{
  \Gamma\vdash e_1 : \tau\xrightarrow{\chi} \tau' \mid \chi_1 \\
  \Gamma\vdash e_2 : \tau \mid \chi_2 
}{\Gamma\vdash e_1~e_2 : \tau' \mid \chi_1\rhd\chi_2\rhd\chi}
\and
\inferrule*[left=T-Ev]{ }{
  \Gamma\vdash \mathbf{ev}(\chi) : \mathsf{unit} \mid \chi
}
\and
\inferrule[T-Branch]{
  \Gamma\vdash e_1 : \tau \mid \chi_1\\
  \Gamma\vdash e_2 : \tau \mid \chi_2
}{
  \Gamma\vdash\mathsf{if*}\;e_1\;e_2 : \tau \mid \chi_1\sqcup\chi_2
}
\and
\inferrule[T-Rec]{\Gamma,f:\tau\xrightarrow{\chi}\tau',x:\tau\vdash e : \tau' \mid \chi}{\Gamma\vdash (\mathsf{rec}_f x:\tau\ldotp e) : \tau\xrightarrow{\chi}\tau' \mid I}
\and
\inferrule[T-Sub]{\Gamma\vdash e : \tau \mid \chi \\ \chi \sqsubseteq \chi'}{\Gamma\vdash e : \tau \mid \chi'}
\end{mathpar}
\[
\begin{array}{r@{\;}c@{\;}l@{\;}l}
\mathsf{ev}(\chi) & \xhookrightarrow{\chi} & \mathsf{tt}\\
(\lambda x:\tau\ldotp e)\;v & \xhookrightarrow{I} & e[x\mapsto v]\\
(\mathsf{rec}_f x:\tau\ldotp e)\;v & \xhookrightarrow{I} & e[x\mapsto v,f\mapsto(\mathsf{rec}_f x:\tau\ldotp e)]\\
e_1\;e_2&\xhookrightarrow{\chi}&e_1'\;e_2&\textrm{if}~e_1\xhookrightarrow{\chi}e_1'\\
v_1\;e_2&\xhookrightarrow{\chi}&e_1\;e_2'&\textrm{if}~e_2\xhookrightarrow{\chi}e_2'\\
\mathsf{if*}\;e_1\;e_2&\xhookrightarrow{I}&e_1\\
\mathsf{if*}\;e_1\;e_2&\xhookrightarrow{I}&e_2
\end{array}
\]
\caption{A simple effect language}
\label{fig:lang}
\end{figure}

The use of non-deterministic branching may appear over-simplified, but the overwhelming majority of effect systems are insensitive to values, matching the assumptions from the design of effect quantales as well as other semantic (categorical) generic models of effect systems.
The use of the event primitive $\mathsf{ev}(-)$ may be more surprising from an abstract interpretation perspective, despite being quite standard in effect systems, because it means the events are
not obviously connected to program semantics.
Many effect systems reason about sequences of external interactions (for which no formal semantics exist) or about internal steps such as throwing an exception~\cite{gosling2014java} (in which case tagging throws is sufficient to distinguish throws inside a corresponding \texttt{try-catch}).
this might seem out of place.
With more space we could pursue Gordon's~\cite{gordon2021polymorphic} parameterization of the language by state and update primitives, with a formal interpretation of effects as relations on states,
but we focus on this core for brevity.
\looseness=-1

Type safety for this calculus can be obtained via standard syntactic techniques~\cite{gordon2021polymorphic},
yielding a guarantee that if
\[\cdot\vdash e_0 : \tau \mid \chi
\quad\textrm{and}\quad
e_0 \xrightarrow{\chi_1} e_1 \xrightarrow{\chi_2} e_2 \ldots \xrightarrow{\chi_{n-1}} e_n
\]
then there exists a $\chi_n$ such that $\cdot\vdash e_n : \tau \mid \chi_n$ and
$\chi_1\rhd\chi_2\rhd\ldots\rhd\chi_{n-1}\rhd\chi_n \sqsubseteq \chi$
(which, notably, implies that the sequential composition is defined and therefore valid).
Rather than repeating standard proofs, we will shortly develop type safety results using abstract interpretation.

\subsection{Examples of Effect Systems}
Most readers are likely familiar with the most widely-used effect system: \textsc{Java}'s checked exceptions~\cite{gosling2014java}, where effects are the set of exceptions
that may be thrown. Some are likely familiar with the modern twist of an effect system tracking which algebraic operations may be invoked~\cite{bauer2014effect}.
But we are concerned with the generalization to cases where order of operations, not only their enclosing constructs, are important.
Figures \ref{fig:crit-effects} recalls Tate's~\cite{tate2013sequential} effect system for tracking use of non-reentrant critical sections --- those where recursive acquisition is unsupported, and therefore an error. If acquisition \lstinline|acq()| has effect \textsf{locking} and release \lstinline|rel()| has effect \textsf{unlocking}, this effect system classifies code that acquires; releases; acquires-and-releases (\textsf{entrant}); assumes-and-leaves-acquired (\textsc{critical}, for code which releases and reacquires or uses the mutually-exclusive resource); and code which is insensitive to the critical section ($\varepsilon$).
Note that repeated locking or unlocking is undefined, as is
the join of effects corresponding to different locking behaviour --- as with regular type systems, effect systems perform both the analysis and rejection of unsatisfactory analysis results (by declining to assign \emph{any} effect to ill-behaved code).

\begin{figure}\scriptsize
\begin{center}
\begin{tikzpicture}[node distance=1.5cm]
\node(neutral){\fbox{$\varepsilon$}};
\node(critical)[above left of=neutral]{\fbox{$\mathsf{critical}$}};
\node(locking)[left of=critical]{\fbox{$\mathsf{locking}$}\hspace{2em}};
\node(entrant)[above right of=neutral]{\fbox{$\mathsf{entrant}$}};
\node(unlocking)[right of=entrant]{${}\qquad{}$\fbox{$\mathsf{unlocking}$}};
\draw(neutral)--(critical);
\draw(neutral)--(entrant);
\end{tikzpicture}

$
\begin{array}{|c|ccccc|}
\hline
\rhd & \mathsf{locking} & \mathsf{unlocking} & \mathsf{critical} & \mathsf{entrant} & \varepsilon \\
\hline
\mathsf{locking} & - & \mathsf{entrant} & \mathsf{locking} & - & \mathsf{locking} \\
\mathsf{unlocking} & \mathsf{critical} & - & - & \mathsf{unlocking} & \mathsf{unlocking} \\
\mathsf{critical} & - & \mathsf{unlocking} & \mathsf{critical} & - & \mathsf{critical} \\
\mathsf{entrant} & \mathsf{locking} & - & - & \mathsf{entrant} & \mathsf{entrant} \\
\varepsilon & \mathsf{locking} & \mathsf{unlocking} & \mathsf{critical} & \mathsf{entrant} & \varepsilon \\
\hline
\end{array}
$
\end{center}
\caption{Tate's critical section effects as an effect quantale \textsc{Crit}. $-$ represents an undefined result for composition.}
\label{fig:crit-effects}
\end{figure}

\section{Abstracting Trace Semantics}
We begin by developing denotational semantics for the language of the previous section,
with an eye towards following the approach of \citet{cousot1997types} in abstracting directly
from denotational semantics.
As noted above, unlike the traces most often considered in abstract interpretation,
generic studies of effect systems are often concerned with sequences of events, rather than states.

This is most explicit in the concrete semantics, where
computations produce values alongside a trace of events triggered/emitted in the process of producing that value.
Thus we model expression semantics as (meta-)functions from
environments to computations of value-trace pairs.
Because our language treats branching non-deterministically,
computations produce \emph{sets} of these pairs --- different
branches may yield potentially different values and traces.
\[
\begin{array}{r@{\;}c@{\;}ll}
\mathbb{W}&=&\{\omega\} & \textrm{error}\\
\mathbb{U} &=& \mathsf{tt} & \textrm{unit}\\
f,g\in\mathbb{V}&\simeq&\mathbb{W}_\bot\oplus \mathbb{U}_\bot\oplus[\mathbb{V}\rightarrow\power{\mathbb{V}\times Q^*}]_\bot & \textrm{values}\\
R\in\mathbb{R}&=&\mathbb{X}\rightarrow\mathbb{V} & \textrm{environments}\\
\phi\in\mathbb{S}&=&\mathbb{R}\rightarrow\power{\mathbb{V}\times Q^*}&\textrm{semantic domains}
\end{array}
\]
Note that values are distinct from computations: the result of  a computation is drawn from \power{\mathbb{V}\times Q^*}, while
values do not carry traces. The portion of the value domain used for the semantics of closures produces are (meta-)functions from values to computation results.

\newcommand{\csem}[1]{\ensuremath{\mathbf{S}\llbracket{#1}\rrbracket}\xspace}
For brevity, we assume our metalanguage has monadic structure for handling undefinedness or errors, and for mapping over sets.
In particular we use
\[
(x,T) \seq (y,T') \equiv ((x = \bot \vee x = \Omega) \mathbin{?} \langle x,T\rangle \mathbin{:} \langle y,T\cdot T'\rangle)
\]
to concisely propagate invalid computations, and
for $S\in\power{A\times Q^*}$ and a function $f:A\rightarrow\power{B\times Q*}$
\[
S \bind f \equiv \bigcup\{ f\;a \mid a \in S \}
\]
to handle sets.
We also use pattern-matching on pairs when writing functions in the metalanguage (denoted by $\Lambda$ below), rather than using projections. We write $\Omega$ for the injection of $\omega$ into the lifted domain.
\[
\begin{array}{r@{\;}c@{\;}l}
\csem{\mathsf{tt}}&=& \Lambda R\ldotp \{\langle\uparrow\mathsf{tt},I\rangle\}\\
\csem{x}&=&\Lambda R\ldotp \{\langle R(x), I\rangle\}\\
\csem{\lambda x:\tau\ldotp e}&=&
  \Lambda R\ldotp
    \{\langle\uparrow(\Lambda \hat{x}\ldotp \hat{x} \seq (\csem{e}(R[x\leftarrow \hat{x}]))),I\rangle\}\\
\csem{e_1\;e_2}&=&
  \Lambda R\ldotp \csem{e_1}(R) \bind (\Lambda \langle f, T_1\rangle\ldotp\\
  &&\qquad (f=\bot\vee f=\Omega)\mathbin{?} \langle f, T_1\rangle\mathbin{:}\\
  &&\qquad \csem{e_2}(R) \bind (\Lambda\langle a,T_2\rangle\ldotp\\
  &&\qquad\quad (a=\bot\vee a=\Omega)\mathbin{?}\langle a,T_1\cdot T_2\rangle\mathbin{:}\\
  &&\qquad\quad (\mathsf{nonfunc}(f) \mathbin{?}\langle\Omega,T_1\cdot T_2\rangle  \mathbin{:}\\
  &&\qquad\quad\mathsf{let}\;\langle u,T_3\rangle = f\;a\;\mathsf{in}\; \langle u,T_1\cdot T_2\cdot T_3\rangle)))\\
\csem{\mathbf{ev}(\chi)}&=& \Lambda R\ldotp \{\langle\uparrow\mathsf{tt},\chi\rangle\}\\
\csem{\mathsf{if*}\;e_t\;e_f}&=&
  \Lambda R\ldotp \csem{e_t}(R)\cup\csem{e_f}(R)\\
\csem{\mathsf{rec}_f x:\tau\ldotp e}&=&
  \Lambda R\ldotp
    \mathsf{lfp}_{\uparrow(\Lambda\hat{x}\ldotp \emptyset)}
      \left(\Lambda\varphi\ldotp \csem{\lambda x:\tau\ldotp e}(R[f\leftarrow\varphi])\right)
\end{array}
\]
Again, while it may seem odd for the semantics to directly include use of effects, which are used to approximate the semantics, this actually matches typical usage. Often for an effect quantale $Q$ there is a subset $C\subseteq Q$ of \emph{concrete} effects which may arise in executions (e.g., raising a specific single exception) with the rest of the effect quantale ($Q\setminus C$) conceptually grouping them (e.g., tracking sets of thrown exceptions).\footnote{This is not unlike the situation with Kleene Algebras with Tests (\textsc{KAT}s)~\cite{kozen1997kleene}, where the tests are a particular subset of the carrier set.}
These concrete effects often play a role similar to the order-theoretic notion of atoms (minimal non-bottom elements) in abstract interpretation.

\paragraph{Collecting Semantics}
Our concrete semantics is already non-deterministic, but still assigns
only a single semantic function to an expression. Thus we still define a separate collecting semantics in \power{\mathbb{S}}:
\newcommand{\col}[1]{\ensuremath{\mathbf{C}\llbracket{#1}\rrbracket}\xspace}
\[
\begin{array}{r@{\;}c@{\;}l}
\col{-}&:&\mathbb{E}\rightarrow\power{\mathbb{S}}\\
\col{e}&=&\{\csem{e}\}
\end{array}
\]

\section{Concretizing Effect Quantales}
As effects are intended to summarize possible program behaviours,
we can (begin to) treat them as abstract domains.
Eventually we must treat types and typing environments as well, and relate all of these elements to semantics. But for now we focus on effects as describing sets of traces: %
\[
\begin{array}{r@{\;}c@{\;}l}
\gamma_0(-)&:&Q\rightarrow (\power{Q^*})\\
\gamma_0(\chi)&=&\{T \mid \mathsf{fold}\;{\rhd}\;I\;T\sqsubseteq\chi\}
\end{array}
\]
Thus every effect $\chi$ abstracts any and all event traces whose sequential composition (under $\rhd$) is bounded by ($\sqsubseteq$) $\chi$.
Abstraction becomes more challenging, because there are traces for which the \textsf{fold} operation is undefined.

Figure \ref{fig:crit-effects} recalls Tate's effect system~\cite{tate2013sequential} for non-reentrant critical section tracking,
as an effect quantale~\cite[\S4.7]{gordon2021polymorphic}.
This is a compact example of why effect quantale operations are partial: certain sequences (such as acquiring a non-reentrant lock twice in a row, represented as $\mathsf{locking}\rhd\mathsf{locking}$) are considered errors, and by not allowing such compositions to yield an effect, there is no effect to assign to code containing the corresponding error.

However our concrete semantics permits such misorderings:
for any semantic environment $R$,
using $e_1;_\tau e_2$ as shorthand for $(\lambda x:\tau\ldotp e_2)\;e_1$,
\[\csem{\mathsf{ev}(\mathsf{locking});_\mathsf{unit} \mathsf{ev}(\mathsf{locking})}(R) =\{\langle\mathsf{tt},\mathsf{locking}\cdot\mathsf{locking}\rangle\}\]
This computation is --- by design --- \emph{not} abstracted by any effect in \textsc{Crit}.
A similar issue arises with undefined joins:
\[\begin{array}{l}
\csem{\mathsf{if*}\;(\mathsf{ev}(\mathsf{locking}))\;(\mathsf{ev}(\mathsf{unlocking}))}(R) \\
\qquad=\{\langle\mathsf{tt},\mathsf{locking}\rangle,\langle\mathsf{tt},\mathsf{unlocking}\rangle\}
\end{array}\]

There are two plausible approaches to handling this:
\begin{itemize}
\item Partial Galois connections~\cite{kappelmann2023transport,dagand2018foundations}
\item Completions~\cite{macneille1937partially,schmidt2012inverse} 
\end{itemize}
While partial Galois connections are not new, completing the partial order structure allows applying
the rich existing literature on abstract interpretations without further modification..
Thus we complete an effect quantale to a complete lattice suitable for abstract interpretation in two steps:
adding a completion of the monoid structure to classify undesirable traces,
and completing the order. The order completion is standard but slightly more involved, so we begin there.

\subsection{(Dedekind-)MacNeille Completions}

\newcommand{\upperbounds}[1]{\ensuremath{{(#1)^{u}}}}
\newcommand{\lowerbounds}[1]{\ensuremath{{(#1)^{\ell}}}}

For a given partial order $Q$ (e.g., an effect quantale), and $S\subseteq Q$,
let
\[
\upperbounds{S}=\{\chi\in Q \mid \forall s\in S\ldotp s \sqsubseteq\chi\} \quad \textrm{(upper bounds)}
\]\[ %
\lowerbounds{S}=\{\chi\in Q \mid \forall s\in S\ldotp \chi\subseteq s \} \quad \textrm{(lower bounds)}
\]
\[\downarrow x = \{y \mid y \sqsubseteq x\}\qquad \downarrow S=\{y\mid \exists x\in S\ldotp y\sqsubseteq x\}\]
\newcommand{\mac}[1]{\ensuremath{\mathbb{M}(#1)}}
\begin{definition}[Standard Completion]
The standard, or {(Dedekind-)Macneille} completion \mac{Q} of a partial order $Q$
is defined as
\[\mac{Q}=\{S \subseteq Q \mid \lowerbounds{\upperbounds{S}}=S\}\]
That is, the completion is downward-closed subsets of $Q$ which are lower bounds of upper bounds.
\end{definition}
\begin{proposition}
\label{prop:macneille}
\mac{Q} has several useful properties:
\begin{itemize}
\item \mac{Q} is a complete lattice under the usual set operations, ordered by $\subseteq$.
\item $Q$ embeds into $\mac{Q}$ via $\chi\mapsto {\downarrow{\chi}}$; alternatively, $\downarrow-$ is an order embedding.
\item $\downarrow-$ preserves all meets and joins that already exist in $Q$.
\end{itemize}
\end{proposition}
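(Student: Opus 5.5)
The plan is to work entirely with the Galois connection between subsets of $Q$ given by the two polar maps $S\mapsto\upperbounds{S}$ and $T\mapsto\lowerbounds{T}$. Both maps are antitone. Each set is contained in the lower bounds of its upper bounds, and each set is contained in the upper bounds of its lower bounds. Consequently $c(S)=\lowerbounds{\upperbounds{S}}$ is a closure operator on $\power{Q}$: it is extensive, monotone and idempotent, and idempotence follows from $\upperbounds{\lowerbounds{\upperbounds{S}}}=\upperbounds{S}$. By definition, $\mac{Q}$ is exactly the set of fixed points of $c$. (The definition's ``$\chi\subseteq s$'' in the lower-bound set should be read as $\chi\sqsubseteq s$.)

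\textbf{Step 1: complete lattice.} I will use the standard fact that the fixed points of a closure operator on a complete lattice themselves form a complete lattice. First, $\lowerbounds{T}=\bigcap_{t\in T}\downarrow t$, so any intersection of closed sets is closed. Indeed, if $S_i=\lowerbounds{T_i}$ for each $i$, then $\bigcap_i S_i=\lowerbounds{\bigcup_i T_i}$. Hence meets in $\mac{Q}$ are intersections, and the top element is $Q=\lowerbounds{\emptyset}$. Joins are then forced: $\bigvee_i S_i=c(\bigcup_i S_i)$. This is the sense of ``the usual set operations, ordered by $\subseteq$'': meets are intersections, and joins are unions followed by closure. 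I will state this caveat explicitly, since a plain union need not be closed.

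\textbf{Step 2: embedding.} I will show $\downarrow\chi=\lowerbounds{\{\chi\}}$ and $\upperbounds{\downarrow\chi}={\uparrow}\chi$, where ${\uparrow}\chi=\{y\mid \chi\sqsubseteq y\}$. It follows that $c(\downarrow\chi)=\lowerbounds{{\uparrow}\chi}=\downarrow\chi$, so $\downarrow\chi\in\mac{Q}$. Order-reflection also holds: if $\downarrow\chi\subseteq\downarrow\psi$, then $\chi\in\downarrow\psi$, so $\chi\sqsubseteq\psi$. Monotonicity is immediate. The one point needing care is that $\sqsubseteq$ is a genuine partial order even though $\sqcup$ is partial. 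Reflexivity requires $x\sqcup x$ to be defined and equal to $x$, and transitivity and antisymmetry require associativity and commutativity of the join where defined. I will take these from the intended reading of $\sqcup$ as a partial join, as in Gordon's definition~\cite{gordon2021polymorphic}, and flag that this is the assumption being used.

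\textbf{Step 3: preservation.} Suppose $x=\bigsqcup_Q S$ exists in $Q$. The upper bounds of $\bigcup_{s\in S}\downarrow s$ are exactly the upper bounds of $S$, which form ${\uparrow}x$. Closing therefore gives $\lowerbounds{{\uparrow}x}=\downarrow x$, so $\bigvee_i\downarrow s_i=\downarrow x$. For meets, suppose $y=\bigsqcap S$ exists. Then $\bigcap_{s\in S}\downarrow s$ is the set of lower bounds of $S$, which is $\downarrow y$ by the definition of greatest lower bound. This works for arbitrary index sets, including the empty one, provided the corresponding bound exists in $Q$.

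The main obstacle is not the lattice theory, which is classical (MacNeille 1937). It is being precise about two things. The first is that ``usual set operations'' means intersection together with closed union, not plain union. The second is that the partial $\sqcup$ induces a bona fide partial order. I would handle the first by an explicit remark and the second by appeal to the partial-join axioms of effect quantales.
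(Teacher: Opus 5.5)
Your proposal is correct. It differs from the paper only in that the paper gives no argument at all: it records these as the classical facts about the MacNeille completion (citing MacNeille) and moves on. You instead reconstruct the standard proof via the polar Galois connection $S\mapsto\upperbounds{S}$, $T\mapsto\lowerbounds{T}$ and its closure operator, observing that closed sets are exactly the sets $\lowerbounds{T}$ and are therefore closed under intersection.

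The citation buys brevity. Your self-contained route buys precision about what ``the usual set operations'' can mean, and that precision is substantive rather than pedantic. Meets in \mac{Q} are intersections, but joins are closures of unions. This contradicts the paper's remark right after the proposition that joining ${\downarrow x}$ and ${\downarrow y}$ when $x\sqcup y$ is undefined gives simply ${\downarrow\{x,y\}}$.

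In \textsc{Crit}, for instance, $\mathsf{locking}$ and $\mathsf{unlocking}$ have no common upper bound, so $\{\mathsf{locking},\mathsf{unlocking}\}^{u}=\emptyset$. Their join in \mac{Q} is therefore $\emptyset^{\ell}=Q$, the top element. The union $\{\mathsf{locking},\mathsf{unlocking}\}$ is not even an element of \mac{Q}, so your explicit caveat is exactly what the first bullet needs in order to be true.

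Your reliance on the partial join inducing a genuine partial order is the same assumption the paper makes when it asserts that every effect quantale yields one.
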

It is worth noting what happens when joining ${\downarrow x}$ and ${\downarrow y}$ in \mac{Q} when $x\sqcup y$ is undefined in $Q$.
The result is simply the union of the two downward closures: ${\downarrow\{x,y\}}$.

\subsection{Effect Quantale Completions}
As noted earlier, we must complete not only the order, but also the monoid: effect quantales (and effect systems) intentionally avoid classifying undesirable behaviours, in part by leaving $\rhd$ undefined. We must add a way to classify those behaviours for the Galois connection.

We define a monoid completion operation $\bigstar(Q)$ on an effect quantale $Q$ to add a new element $\bigstar$ if $\rhd$ is partial, extending $\rhd$ to yield $\bigstar$ whenever $\rhd$ was undefined in $Q$. $\sqcup$ is extended such that $\bigstar$ is unordered with any other element. If $Q$ was already total, $\bigstar(Q)=Q$.

\newcommand{\depower}[1]{\ensuremath{\mathbb{P}_{\not\sqsubseteq}(#1)}\xspace}

Then we define the completion $\mathbb{C}(Q)=\mac{\bigstar(Q)}$.

\[
\begin{array}{r@{\;}c@{\;}l}
\gamma(-)&:&\mathbb{C}(Q)\rightarrow \power{\mathbb{V}\times Q^*}\\
\gamma(S)&=&\bigcup_{\chi\in S}\{T\mid \mathsf{fold}\;{\rhd}\;I\;T\in \chi\}\\
\\
\alpha(-)&:&\power{Q^*}\rightarrow{\mathbb{C}(Q)}\\
\alpha(S)&=& \downarrow\left(\bigcup_{T\in S}\{\mathsf{fold}\;{\rhd}\;I\;T\}\right)
\end{array}
\]
Note that if the completion was more than a no-op,
this concretizes $\bigstar$ to the set of invalid traces and $\top$ to the set of all traces, as suggested above.
This provides us with a Galois connection
\[\langle\power{Q^*},\subseteq\rangle \xleftrightharpoons[\alpha]{\gamma} \langle{\mathbb{C}(Q)},\subseteq\rangle\]
on computations, which we can now extend to the full semantics.
This is in fact a Galois insertion: $\alpha(\gamma(S))=S$.

The abstract operations on this domain are post-composition of (concrete) effects. And unsurprisingly,
post-composition in $\mathbb{C}(Q)$ is the most precise abstraction of post-composition in the semantics:
for any effect $\chi\in Q$:
\[ X\mapsto X\rhd \chi \equiv X\mapsto\alpha(\gamma(X)\rhd\chi) \]

\section{Type-and-Effect Systems as Abstract Interpretation}
We are now equipped to fully model an abstract effect system as a construction of an abstract interpretation.

Following \citet{cousot1997types}, we separately treat type environments and types as domains, and construct the domain for a typing judgment from these (plus the domain of effects).
The most interesting in this regard is the treatment of the function type.
Recall that function types in type-and-effect systems carry latent effects of the function body as part of the type.
We offer the following interpretations for types as an abstract domain,
assuming
\[
\begin{array}{r@{\;}c@{\;}l}
\tau\in\mathsf{Type} & ::= & \mathsf{unit} \mid \tau\xrightarrow\tau'\\
\Gamma\in\mathsf{TypeEnvironment} & ::= & \epsilon \mid \Gamma,x:\tau\\
\mathsf{Typing} &=& \mathsf{TypeEnvironment}\times\mathsf{Type}\times Q\\
\mathsf{ProgramType}&=&\power{\mathsf{Typing}}
\end{array}
\]
While standard in treatments of type systems as abstract interpretations, we note for type-systems-oriented readers that the distinction between \textsf{Typing} and \textsf{ProgramType} captures the distinction between a chosen typing for a program expression and the space of all valid typings for a program expression. For example, a lone variable $x$ has a separate \textsf{Typing} for each type context that assigns a type to $x$.\footnote{This consequently means that $\emptyset$ is the greatest program type, and $\mathsf{Typing}$ is the least program type.}
\[
\begin{array}{r@{\;}c@{\;}l}
\gamma(-)&:&\mathsf{Type}\rightarrow\power{\mathbb{V}}\\
\gamma(\mathsf{unit})&=& \{\mathsf{tt}\}\\
\gamma(\tau\xrightarrow\chi\tau')&=&\{\langle f,T\rangle \mid \forall v\in\gamma(\tau)\ldotp f\;v\in(\gamma(\tau')\times\gamma(\chi)) \}\\
\\
\gamma(-)&:&\mathsf{TypeEnvironment}\rightarrow\power{\mathbb{R}}\\
\gamma(\epsilon)&=&\{\emptyset\}~\textrm{(i.e., the singleton set with empty function)}\\
\gamma(\Gamma,x:\tau)&=& \{ R\in\mathbb{R} \mid \exists R'\in\gamma(\Gamma) \ldotp
  \exists v\in\gamma(\tau)\ldotp R=R[x\mapsto v]\}\\
\\
\gamma(-)&:& \mathsf{Typing} \rightarrow\power{\mathbb{V}\times Q^*}\\
\gamma(\langle\Gamma,\tau,\chi\rangle)&=& \{ \phi\in\mathbb{S} \mid \forall R\in\gamma(\Gamma)\ldotp \phi(R)\in \gamma(\tau)\times\gamma(\chi) \} \\
\\
\gamma(-)&:&\mathsf{ProgramType}\rightarrow\power{\mathbb{S}}\\
\gamma(\overline{T})&=&\bigcap\{\gamma(\langle\Gamma,\tau,\chi\rangle)\mid \langle\Gamma,\tau,\chi\rangle\in\overline{T}\}
\end{array}
\]
Notice that the function type does not constrain the execution trace that produces the function,
only the behaviour of the function itself.

$\gamma$ preserves meets between these two complete lattices: for any collection $\overline{T}_{i\in\Delta}$:
\[ \gamma\left(\bigcup_{i\in\Delta}\overline{T}_i\right) = \bigcap_{i\in\Delta}\gamma(\overline{T_i}) \]
(which is meet preservation because the ordering on \textsf{ProgramType} is $\supseteq$, not $\subseteq$).
Thus
there is a Galois connection
\[ \langle\power{\mathbb{S}},\subseteq\rangle\xleftrightharpoons[\alpha]{\gamma}\langle\mathsf{ProgramType},\supseteq\rangle  \]
with $\alpha$ the unique lower adjoint defined as:
\[\alpha(S)=\bigcup\{\overline{T} \mid S\subseteq\gamma(\overline{T})\}\]
i.e., a set of program semantics is abstracted to the union of all program types whose concretizations include the semantics in question.
This makes \textsf{ProgramType}s a sound abstraction of program semantics, and the typings $\mathbb{T}\llbracket e\rrbracket$ of a program expression $e$ are
\[
\mathbb{T}\llbracket e\rrbracket= \alpha(\{\csem{e}\})
\]
We can now prove soundness of the rules in Figure \ref{fig:lang}, and type safety, given the redefinition of typing judgments as $\Gamma\vdash e : \tau \mid \chi \equiv \langle\Gamma,\tau,\chi\rangle\in\mathbb{T}\llbracket e\rrbracket\land\chi\in Q$.
\begin{lemma}[Type Safety]
If $\Gamma\vdash e : \tau \mid I$, then
\[\forall R\in\gamma(\Gamma)\ldotp \csem{e}(R)\in(\gamma(\tau)\times\gamma(\chi)) \]
\end{lemma}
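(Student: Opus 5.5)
The plan is to read the statement directly off the Galois connection $\langle\power{\mathbb{S}},\subseteq\rangle\xleftrightharpoons[\alpha]{\gamma}\langle\mathsf{ProgramType},\supseteq\rangle$, using the redefinition of typing judgments as $\langle\Gamma,\tau,\chi\rangle\in\mathbb{T}\llbracket e\rrbracket$. No induction on the rules of Figure~\ref{fig:lang} is needed for this lemma. That induction is what soundness of the individual rules requires, not this statement.

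I read the $I$ in the hypothesis as a typo for a general effect $\chi\in Q$. The conclusion mentions $\chi$, and the argument is uniform in it, so the case $\chi=I$ is a special instance. I also read ``$\csem{e}(R)\in\gamma(\tau)\times\gamma(\chi)$'' as inclusion of the computation's result set, exactly as in the definition of $\gamma(\langle\Gamma,\tau,\chi\rangle)$. Under that reading, $\gamma(\chi)$ means the concretization of ${\downarrow\chi}\in\mathbb{C}(Q)$ via the embedding of Proposition~\ref{prop:macneille}.

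\emph{Step 1 (unfold $\alpha$).} Assume $\langle\Gamma,\tau,\chi\rangle\in\mathbb{T}\llbracket e\rrbracket=\alpha(\{\csem{e}\})=\bigcup\{\overline{T}\mid\{\csem{e}\}\subseteq\gamma(\overline{T})\}$. Then there is some program type $\overline{T}$ with $\langle\Gamma,\tau,\chi\rangle\in\overline{T}$ and $\csem{e}\in\gamma(\overline{T})$.

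\emph{Step 2 (meet structure of $\gamma$).} Since $\gamma(\overline{T})=\bigcap\{\gamma(t)\mid t\in\overline{T}\}$, it follows that $\csem{e}\in\gamma(\langle\Gamma,\tau,\chi\rangle)$. Equivalently, one can use the meet-preservation property stated earlier: $\{\langle\Gamma,\tau,\chi\rangle\}\subseteq\overline{T}$ means $\overline{T}$ lies below the singleton in the $\supseteq$ order, so $\gamma$ is monotone across them. Alternatively, one can argue via the counit of the adjunction, $\{\csem{e}\}\subseteq\gamma(\alpha(\{\csem{e}\}))\subseteq\gamma(\{\langle\Gamma,\tau,\chi\rangle\})$. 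The first inclusion is extensivity of $\gamma\circ\alpha$. The second holds because $\{\langle\Gamma,\tau,\chi\rangle\}\subseteq\alpha(\{\csem{e}\})$ together with the fact that $\gamma$ is antitone with respect to $\subseteq$.

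\emph{Step 3 (unfold the typing concretization).} By definition, $\gamma(\langle\Gamma,\tau,\chi\rangle)=\{\phi\mid\forall R\in\gamma(\Gamma)\ldotp\phi(R)\in\gamma(\tau)\times\gamma(\chi)\}$. Instantiating $\phi:=\csem{e}$ yields the claim verbatim.

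I expect the main obstacle to be the notational looseness rather than any mathematics. First, the concretization of $\chi\in Q$ must be routed through $\mathbb{C}(Q)$ as ${\downarrow\chi}$. Second, the ``$\in$'' of a set of value--trace pairs has to be read as inclusion in a product. Third, the side condition $\chi\in Q$ from the redefined judgment guarantees that $\chi\ne\bigstar$, so $\gamma(\chi)$ contains only traces whose fold is defined and below $\chi$. This last point is exactly the safety content, namely that no ill-ordered event sequence is produced, and I would remark on it once Step 3 is done.
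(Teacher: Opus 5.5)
Your proposal is correct and takes the same route as the paper, whose entire proof is that the lemma ``follows directly from the definition of concretization.'' Your Steps 1--3 (unfold $\alpha(\{\csem{e}\})$ as a union, use that $\gamma$ of a program type is the intersection of its typings' concretizations, then unfold $\gamma(\langle\Gamma,\tau,\chi\rangle)$ at $\phi=\csem{e}$) make that unfolding explicit, and reading the $I$ in the hypothesis as a typo for $\chi$ is the sensible choice. One terminological slip in your alternative argument: the inclusion $S\subseteq\gamma(\alpha(S))$ is the unit of the adjunction, not the counit.
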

\begin{proof}
Follows directly from the definition of concretization.
\end{proof}
\begin{lemma}[Type Rule Soundness]
\label{lem:rule_soundness}
Each type rule of Figure \ref{fig:lang} (with $\chi,\chi'$ ranging over $Q$, not $\mathbb{C}(Q)$) is a valid logical inference.
\end{lemma}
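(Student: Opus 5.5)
The typing judgment is now \emph{defined} as $\langle\Gamma,\tau,\chi\rangle\in\mathbb{T}\llbracket e\rrbracket=\alpha(\{\csem{e}\})$ with $\chi\in Q$. By the Galois connection $\langle\power{\mathbb{S}},\subseteq\rangle\leftrightharpoons\langle\mathsf{ProgramType},\supseteq\rangle$, this is equivalent to $\csem{e}\in\gamma(\langle\Gamma,\tau,\chi\rangle)$. Unfolded, that means $\forall R\in\gamma(\Gamma)\ldotp \csem{e}(R)\subseteq\gamma(\tau)\times\gamma(\chi)$, reading $\in$ pointwise over the result set. So every rule reduces to a semantic implication: assuming the premises' denotations lie in the concretizations of their typings, show the conclusion's denotation lies in the concretization of its typing. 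I would therefore prove each rule as a semantic lemma about $\csem{-}$, by case analysis on the rules, using only the structure of $\gamma_0$ on effects. Throughout I use the fact that $T\in\gamma(\downarrow\chi)$ iff $\mathsf{fold}\;{\rhd}\;I\;T$ is defined (not $\bigstar$) and $\sqsubseteq\chi$.

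The easy cases come first. \textsc{T-Unit}, \textsc{T-Var} and \textsc{T-Ev} are immediate from the semantic clauses, since the traces are $I$ (folding to $I$) or the singleton $\chi$ (folding to $I\rhd\chi=\chi$). For \textsc{T-Var} I use $R(x)\in\gamma(\Gamma(x))$ from the definition of $\gamma(\Gamma,x:\tau)$. \textsc{T-Sub} follows from monotonicity of $\gamma$, i.e.\ $\chi\sqsubseteq\chi'$ implies $\downarrow\chi\subseteq\downarrow\chi'$ by Proposition~\ref{prop:macneille}. \textsc{T-Branch} uses that the union of the two result sets has traces folding below $\chi_1$ or $\chi_2$, hence below $\chi_1\sqcup\chi_2$, which is assumed defined since $\chi_1\sqcup\chi_2\in Q$. \textsc{T-Lambda} requires showing the closure value lies in $\gamma(\tau\xrightarrow{\chi}\tau')$. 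For $v\in\gamma(\tau)$, the extended environment $R[x\leftarrow v]$ lies in $\gamma(\Gamma,x:\tau)$, so the premise gives exactly the required condition on $f\;v$. The construction trace is $I$, folding to $I$.

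\textsc{T-App} is the main computational case. The traces of the conclusion have the form $T_1\cdot T_2\cdot T_3$, with $\mathsf{fold}\,T_1\sqsubseteq\chi_1$, $\mathsf{fold}\,T_2\sqsubseteq\chi_2$ and $\mathsf{fold}\,T_3\sqsubseteq\chi$. I first need $\mathsf{fold}$ to be a monoid homomorphism, so that $\mathsf{fold}(T_1\cdot T_2\cdot T_3)=\mathsf{fold}\,T_1\rhd\mathsf{fold}\,T_2\rhd\mathsf{fold}\,T_3$; this follows by induction from associativity and the unit. I then need $\rhd$ to be monotone in each argument, which follows from the distributive laws. Partiality is the delicate point: monotonicity must show that if $\chi_1\rhd\chi_2\rhd\chi$ is defined then so is the composite of the smaller elements. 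From $a\sqsubseteq\chi_1$ we get $\chi_1=a\sqcup\chi_1$, so $\chi_1\rhd z=(a\rhd z)\sqcup(\chi_1\rhd z)$. Definedness of the left side must force definedness of $a\rhd z$, and I would read the distributive law as a Kleene equality to obtain this. The error branches ($\Omega,\bot$) do not arise, because the premise typings exclude them (values lie in $\gamma(\tau)$, which contains no $\Omega$) and $\mathsf{nonfunc}(f)$ is false for $f\in\gamma(\tau\xrightarrow{\chi}\tau')$.

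I expect \textsc{T-Rec} to be the main obstacle. It needs a fixpoint-induction argument: the set $\gamma(\tau\xrightarrow{\chi}\tau')$ must contain the least fixpoint of $\Lambda\varphi\ldotp\csem{\lambda x\ldotp e}(R[f\leftarrow\varphi])$. I would argue by Scott induction. The base function $\Lambda\hat{x}\ldotp\emptyset$ is trivially in the concretization, because its result set is empty and so satisfies any pointwise membership. The premise, together with the \textsc{T-Lambda} argument, shows the functional preserves membership in $\gamma(\tau\xrightarrow{\chi}\tau')$. The remaining obligation is admissibility: the concretization of a function type must be closed under directed unions of result sets. This holds because the condition is a pointwise, universally quantified membership of each result pair, and is therefore preserved by unions. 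The delicate part is making this chain argument precise against the somewhat informal domain equation for $\mathbb{V}$.
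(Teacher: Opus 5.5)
Your proposal is correct and follows the same route as the paper. The paper also reduces each judgment, via the Galois connection, to membership of the denotation of $e$ in $\gamma(\langle\Gamma,\tau,\chi\rangle)$, and then checks the rules one by one against the semantic clauses. The paper's one-line proof calls this direct, whereas you make explicit the two points it glosses over: \textsc{T-App} needs $\rhd$ to be monotone with definedness propagating downward, which requires reading the distributive laws as Kleene equalities, and \textsc{T-Rec} needs a fixpoint (Scott) induction with an admissibility check.
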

\begin{proof}
Rule by rule, the soundness follows directly from the membership interpretation of the judgment.
\end{proof}
Note that the restriction of the effect positions in the type rules to elements of $Q$ rather than its completion retains the partial behaviour of the original type rules.

For reasons of space we do not pursue the more traditional abstract interpretation goal of \emph{deriving} typing rules from the abstraction. However the results above are already interesting for multiple reasons.
First, the proofs apply to a very general class of effect systems, yielding both a more general connection between effect systems and abstract interpretation than has previously been established.
Second, these results resurface a seemingly overlooked connection between the abstract interpretation approach to type systems
and the modern approach of what is now called the \emph{logical approach to type soundness}~\cite{timany2024logical}, where types are given denotations as formulae in a program logic. But earlier work in this space~\cite{gordon2012uniqueness} used an approach~\cite{dinsdale2013views} where types were explicitly given denotations as \emph{sets}, which were then treated logically in the same way as in abstract interpretation~\cite{cousot2024calculational}.

\section{Iteration}
Most effect quantales of interest are also \emph{iterable}, adding an operator $(-)^*$ which is useful for giving effects to recursive programs or unbounded finite repetition:
\begin{definition}[Iterable Effect Quantale]
\label{def:iterable}
An \emph{iterable} effect quantale $Q$ carries an additional operator $(-)^*$ (called iteration) which is
monotone ($x\sqsubseteq y\Rightarrow x^*\sqsubseteq y^*$),
idempotent ($(x^*)^*=x^*$),
extensive ($x\sqsubseteq x^*$),
non-empty ($I\sqsubseteq x^*$),
and
foldable ($x^*\rhd x^*\sqsubseteq x^*$).
\end{definition}
This is used directly in typing judgments in languages with imperative looping constructs, such as adding a rule such as
\[
\inferrule[T-While]{
  \Gamma\vdash e_c : \mathsf{boolean} \mid \chi_c\\
  \Gamma\vdash e_b : \mathsf{unit} \mid \chi_b
}{
  \Gamma\vdash \mathsf{while}(e_c)\{e_b\} : \mathsf{unit} \mid \chi_c\rhd(\chi_b\rhd\chi_c)^*
}
\]
in a language with booleans and state, or in certain cases of recursive effects~\cite{gordon2020lifting}.

The first three properties make iteration a closure operator.
Iteration in an effect quantale is subtle: unlike related structures such as Kleene Algebras~\cite{KOZEN1994366}, the effect quantale axioms do not guarantee a unique least fixed point of the functors $F_a(X)=1\sqcup(a\rhd X)$, and the axioms permit structures with infinite descending chains of increasingly lower fixed points but no \emph{least} fixed point.
Thus in general effect quantales require a \emph{choice} of iteration operator, even if in many cases a unique most precise (least fixed point) exists~\cite{gordon2021polymorphic}.
Effect quantales which do possess an optimally precise iteration are called \emph{principally iterable}.
But in general the details of iteration depend on the specific analysis.
But we have developed enough structure to recover an equivalent definition of principal iteration (when defined), and to compare the most common reusable method of computing iteration to computation of fixed points in abstract interpretation.

\subsection{Principal Iteration via Abstraction}
Iteration in the effect system is intended to abstract the finite iteration of the concrete behaviours, so it is natural to consider
concrete iteration, and then build the most precise abstraction of that operator.
Because effects model, essentially, formal languages over $Q$, it is natural to consider
\[
\begin{array}{r@{\;}c@{\;}l}
(-)^*&:&\mathbb{C}(Q) \rightarrow \mathbb{C}(Q)\\
\chi^* &\equiv& \alpha(\gamma(\chi)^*)
\end{array} \]
where the ${}^*$ on the right is the Kleene iteration operator on formal languages.
This construction validates the expected axioms, because we are working with a Galois insertion
and the abstraction of an idempotent formal language (one where $L\cdot L=L$) will yield idempotent elements.

This is principal (optimally precise) for $\mathbb{C}(Q)$.
However the results may not correspond to elements of  $Q$, since the completion contains elements
which are not embedings of effects.
But we may recover the definition of principal effect iteration, when defined, by restricting to the case where the result is the downward closure of a single element:
\[
\begin{array}{r@{\;}c@{\;}l}
(-)^*&:&Q \rightharpoonup Q\\
\chi^* &\equiv& \alpha(\gamma(\downarrow\chi)^*)~\textrm{if}~\exists \chi'\in Q\ldotp  \downarrow\chi'=\alpha(\gamma(\downarrow\chi)^*)
\end{array}
\]
This leaves unresolved the questions of how we may effectively compute the above,
and what to do with non-principally iterable effect quantales.
The former, similar to different abstract domains, depends on the specific analysis.
The second should correspond to a widening operator, but we lack space to explore the details.

\subsection{Iteration in Finite Effect Quantales}
\label{sec:finite}
In the case of \emph{finite} effect quantales, the least fixed point can be directly computed by iterative approximation:
\begin{proposition}[Finite Effect Quantale Iteration~\textrm{\cite[Prop.~5.11]{gordon2021polymorphic}}]
\label{prop:finite_iter}
A finite effect quantale $Q$ is principally iterable. The principal iteration of $\chi$ is given by iteratively computing successive powers of $(\chi\sqcup I)$ until a subidempotent power $n$ is reached, i.e., one where
\[(\chi\sqcup I)^n\rhd(\chi\sqcup I)^n\sqsubseteq (\chi\sqcup I)^n\]
If $(\chi\sqcup I)$ or any power thereof is undefined, then the iteration of $\chi$ is undefined.
\end{proposition}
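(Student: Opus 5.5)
Let $y=\chi\sqcup I$ and write $y^n$ for the $n$-fold $\rhd$-power, with $y^0=I$. The idea is to show that, whenever defined, the powers form an ascending chain in the finite poset $Q$. Such a chain must stabilize, and the stable value is then the least element satisfying the iteration axioms of Definition~\ref{def:iterable}.

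\textbf{Step 1: the powers ascend.} Since $I\sqsubseteq y$, distributivity gives
\[
y^{n+1}=y^n\rhd(\chi\sqcup I)=(y^n\rhd\chi)\sqcup y^n
\]
whenever the left side is defined. Hence $y^n\sqsubseteq y^{n+1}$. Finiteness of $Q$ forces the chain $y^0\sqsubseteq y^1\sqsubseteq\cdots$ to become constant at some $n$ with $y^{n+1}=y^n$. Induction then gives $y^n\rhd y^k=y^n$ for every $k$, and in particular $y^n\rhd y^n=y^n$. So a subidempotent power exists, provided every power up to that point is defined.

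Conversely, suppose $y^m\rhd y^m\sqsubseteq y^m$. Then $y^{m+1}=y^m\rhd y\sqsubseteq y^m\rhd y^m\sqsubseteq y^m$; the middle step uses $y\sqsubseteq y^m$ and monotonicity of $\rhd$, which follows from distributivity. Combined with ascent, this gives $y^{m+1}=y^m$. So the first subidempotent power is exactly the stabilization point, and we set $\chi^*=y^n$.

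\textbf{Step 2: the iteration axioms.} Each axiom is a direct check.
\begin{itemize}
\item Extensive and non-empty: $\chi\sqsubseteq y\sqsubseteq y^n$ and $I\sqsubseteq y^n$.
\item Foldable: this is subidempotence.
\item Idempotence: $(\chi^*\sqcup I)=\chi^*$, and $\chi^*$ is already subidempotent.
\item Monotone: if $x\sqsubseteq z$, then by monotonicity of $\rhd$ and $\sqcup$ we get $(x\sqcup I)^k\sqsubseteq(z\sqcup I)^k$ for all $k$. Taking $k$ large enough that both sequences have stabilized gives $x^*\sqsubseteq z^*$.
\end{itemize}

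\textbf{Step 3: principality.} Let $\chi^\dagger$ be any element satisfying the axioms, in particular any fixed point of $F_\chi$. Then $y=\chi\sqcup I\sqsubseteq\chi^\dagger$. By induction using foldability, $y^k\sqsubseteq(\chi^\dagger)^k\sqsubseteq\chi^\dagger$, so $\chi^*\sqsubseteq\chi^\dagger$ and $\chi^*$ is least.

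\textbf{Undefinedness.} Suppose some $y^k$ with $k\le n$ is undefined, or $y$ itself is undefined. Then no valid iterate can exist: any candidate $\chi^\dagger$ would bound $y^k$ by the argument of Step 3, and that argument requires $y^k$ to be defined. The main obstacle is making this rigorous under partiality. We must show that if $\chi^\dagger$ exists, then the powers $y^k$ are defined.

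I would argue via downward closure of definedness. If $a\rhd b$ is defined and $a'\sqsubseteq a$, then $a'\rhd b$ is defined, because distributivity over $a=a'\sqcup a$ is an equation of partial terms, with $(a'\rhd b)\sqcup(a\rhd b)$ defined. The same reasoning applies in the right argument. Applying this to $y^k\sqsubseteq(\chi^\dagger)^k=\chi^\dagger$ shows that each $y^k$ is defined. I would state this downward-closure lemma explicitly first, since every monotonicity step above silently relies on it.
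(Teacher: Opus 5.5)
Your proof is correct in substance and takes a somewhat different route from the paper's. The paper's sketch, which defers to Gordon's Prop.~5.11, gets the idempotent from the finite-semigroup fact that some power of every element is idempotent. It uses $\chi\sqcup I$ only to secure extensiveness and non-emptiness, and leaves leastness and the stopping rule to the citation; the chain $(\chi\sqcup I)\sqsubseteq(\chi\sqcup I)^2\sqsubseteq\cdots$ appears only later, when the paper compares with Kleene iteration. You instead make ascent the key lemma. From $I\sqsubseteq y$ and distributivity you get $y^n\sqsubseteq y^{n+1}$, so finiteness is used only to bound strictly ascending chains, not for the pigeonhole argument behind idempotent powers. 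Stabilization then gives idempotence directly, identifies the first subidempotent power with the stabilization point, and makes leastness a short induction via foldability. Your route is more self-contained, and it treats partiality explicitly through your downward-closure lemma; the semigroup appeal tacitly assumes every power is defined. The cost is only length.

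Three small repairs are needed.

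\textbf{Index of the converse step.} The converse needs $m\ge 1$. The power $y^0=I$ is always idempotent, and $y\sqsubseteq y^0$ fails in general, so the search must start at the first power.

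\textbf{Definedness of joins.} Your downward-closure lemma covers $\rhd$, but two steps need definedness of a \emph{join}. One is the $k=1$ case of ``each $y^k$ is defined'', namely $\chi\sqcup I$ itself, which the statement explicitly covers. The other is the $\sqcup$-monotonicity used in Step~2. A partial least upper bound need not exist merely because an upper bound does, so this needs an argument. One uses only your reading of distributivity as an equation of partial terms:
\begin{itemize}
\item Let $\chi^\dagger$ be subidempotent with $\chi,I\sqsubseteq\chi^\dagger$. Your lemma gives $\chi^\dagger\rhd\chi$ defined with $\chi^\dagger\rhd\chi\sqsubseteq\chi^\dagger\rhd\chi^\dagger\sqsubseteq\chi^\dagger$.
\item Hence $(\chi^\dagger\rhd I)\sqcup(\chi^\dagger\rhd\chi)=\chi^\dagger$ is defined.
\item By left distributivity, $\chi^\dagger\rhd(I\sqcup\chi)$ is then defined, and therefore so is $I\sqcup\chi$.
\end{itemize}
Monotonicity of ${}^*$ then follows most cleanly from Step~3: $z^*$ is a subidempotent upper bound of $x$ and $I$, so $x^*$ is defined and $x^*\sqsubseteq z^*$.

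\textbf{The fixed-point aside.} The remark ``in particular any fixed point of $F_\chi$'' is unjustified. A fixed point need not be foldable, and your Step~3 induction uses foldability. Drop it, since principality concerns idempotents above $\chi$ and $I$. Alternatively, argue separately: for a fixed point $X$, $y\rhd X=(\chi\rhd X)\sqcup X=X$, whence $y^k=y^k\rhd I\sqsubseteq y^k\rhd X=X$.
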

\begin{proof}[Proof Sketch]
See Gordon~\cite[Prop.~5.11]{gordon2021polymorphic} for full details,
but briefly, $\chi\sqcup I$ ensures the result is extensive and non-empty,
and the iterative powers eventually find an idempotent element, which must exist in any finite semigroup~\cite{almeida2009representation}.
\end{proof}
Iteration models the limit of finite iteration of a loop body with effect $\chi$.
Gordon's description is based on the abstract characterization of (principal) iteration of an effect $\chi$ as producing the least idempotent element $\chi'$ ($\chi'\rhd\chi'=\chi'$) greater than both $\chi$ and $I$.

This turns out to be equivalent in not only result, but method, to how the computation would unfold by directly computing the least fixed point in abstract interpretation.
The iterates in Gordon's approach are
\[(\chi\sqcup I)\sqsubseteq (\chi\sqcup I)^2 \sqsubseteq (\chi\sqcup I)^3 \sqsubseteq \ldots\]
which by distributivity are equivalent to:
\[I\sqcup\chi \sqsubseteq I\sqcup\chi\sqcup\chi^2\sqsubseteq\ldots\]
i.e., the sequence given by $S(i)=\bigsqcup_{n\le i}\chi^i$.
This sequence is of course nearly the same as what results from iterating the standard fixed point functor $F(X)=I\sqcup(\chi\rhd X)$, where the iterates (starting from $I$) are
\[ F^0(I)=I\sqsubseteq I\sqcup \chi\sqsubseteq I\sqcup\chi\sqcup\chi^2\sqsubseteq\ldots\]
So $F^i(I)=S(i)$ for all $i$. Gordon simply skips the first iterate.

\section{Related Work}
As noted earlier, many effect systems have cast themselves informally or formally as abstract interpretations,
though our results are the first to uniformly cast a broad class of effect systems as abstract interpretations.
We focus our discussion on directions for further extension of these results to cover
features exploited in real effect systems but not covered by our work above.

Unlike some effect systems~\cite{nicola2025abstract,sekiyama2023temporal,sekiyama2025algebraic},
this machinery does not treat dependent effects --- those where effects may depend on pure values, even as singletons~\cite{gordon2021polymorphic} (for mere identity tracking, such as for locks~\cite{boyapati02,suenaga2008type}).
Adapting the approach of \citet{nicola2025abstract} into our framework would mean using a concretization
$\gamma(-) : \mathcal{C}(Q)\rightarrow\power{\mathbb{V}\times Q^*}$, which is also rich enough to treat cases like \citet{skalka2020types} where an effect can depend on a returned value.
But more general treatments where arbitrary values may be mentioned in effects requires extending the Galois connection further, with concretization of effects carring a signature
\[\gamma(-):\mathcal{C}(Q))\rightarrow\power{\mathbb{R}\rightarrow\power{\mathbb{V}\times Q^*}}\]

More subtly, this machinery does not treat type-dependency in effects, where effects must be mutually defined with types. This is required for some advanced effect systems~\cite{gordon2020lifting,sekiyama2023temporal} dealing with continuations, where effects must track the types of values returned.
This would be a more substantial investigation, as abstracting mutual definition of types and effects is simply a delicate balancing act; see \citet{gordon2021polymorphic} for a syntactic example.

This approach also does not yet extend to polymorphism over effects (or types), though this plays a major role in practical expressive effect systems~\cite{vanDooren2005,nielson1994constraints,gordon2013java,BIRKEDAL2001299} --- or more precisely, a gap remains between extending the results above (which would be adequate for soundness) to a working inference tool.
\citet{cousot1997types}, whose approach we followed here, applied his technique to ML-style prenex type polymorphism.\footnote{
Full higher-order polymorphic types in the sense of System F lack classical set-theoretic interpretations~\cite{reynolds1984polymorphism,pitts1987polymorphism}, so treating higher-order polymorphism would be yet another non-trivial extension.
  }
The key difficulty is that while inferred \emph{types} are usually resolved by term unification (in the sense of term algebras, not program terms),
effect systems satisfy a range of algebraic equations, rather than merely structural congurences.
Most effect systems which have explicitly cast themselves as abstract interpretations have not included effect polymorphism, so all calculations can be done directly if the effect operations (join, etc.) are computable.
\citet{Skalka2008} includes a form of effect polymorphism where effect variables are replaced by references to which methods were called, essentially adding new atomic effects to their trace effects,
and still working in a fragment of a free structure over atomic events.
\emph{Commutative} effect systems can extend the approaches of \citet{jouvelot1991algebraic}, or in some cases (such as Flix~\cite{madsen2023fast}) boolean unification~\cite{kennedy1996type}.
But in general effects with effect variables are essentially non-commutative polynomials, making the problem much more difficult.

Resolving all of these limitations, systematically for effect systems writ large, is worthwhile future work.

\ifPublic
\begin{acks}
This work was supported by NSF Award \#CCF-2007582.
\end{acks}
\fi

\bibliographystyle{ACM-Reference-Format}

\begin{thebibliography}{58}

%
%
%
%
%
%
%
%
%
%
%
%
%
%
%
%

\ifx \showCODEN    \undefined \def \showCODEN     #1{\unskip}     \fi
\ifx \showISBNx    \undefined \def \showISBNx     #1{\unskip}     \fi
\ifx \showISBNxiii \undefined \def \showISBNxiii  #1{\unskip}     \fi
\ifx \showISSN     \undefined \def \showISSN      #1{\unskip}     \fi
\ifx \showLCCN     \undefined \def \showLCCN      #1{\unskip}     \fi
\ifx \shownote     \undefined \def \shownote      #1{#1}          \fi
\ifx \showarticletitle \undefined \def \showarticletitle #1{#1}   \fi
\ifx \showURL      \undefined \def \showURL       {\relax}        \fi
%
%
\providecommand\bibfield[2]{#2}
\providecommand\bibinfo[2]{#2}
\providecommand\natexlab[1]{#1}
\providecommand\showeprint[2][]{arXiv:#2}

\bibitem[Almeida et~al\mbox{.}(2009)]%
        {almeida2009representation}
\bibfield{author}{\bibinfo{person}{Jorge Almeida}, \bibinfo{person}{Stuart
  Margolis}, \bibinfo{person}{Benjamin Steinberg}, {and}
  \bibinfo{person}{Mikhail Volkov}.} \bibinfo{year}{2009}\natexlab{}.
\newblock \showarticletitle{Representation theory of finite semigroups,
  semigroup radicals and formal language theory}.
\newblock \bibinfo{journal}{\emph{Trans. Amer. Math. Soc.}}
  \bibinfo{volume}{361}, \bibinfo{number}{3} (\bibinfo{year}{2009}),
  \bibinfo{pages}{1429--1461}.
\newblock


\bibitem[Amtoft et~al\mbox{.}(1999)]%
        {amtoft1999}
\bibfield{author}{\bibinfo{person}{Torben Amtoft}, \bibinfo{person}{Flemming
  Nielson}, {and} \bibinfo{person}{Hanne~Riis Nielson}.}
  \bibinfo{year}{1999}\natexlab{}.
\newblock \bibinfo{booktitle}{\emph{{Type and Effect Systems: Behaviours for
  Concurrency}}}.
\newblock \bibinfo{publisher}{Imperial College Press},
  \bibinfo{address}{London, UK}.
\newblock


\bibitem[Bao et~al\mbox{.}(2021)]%
        {bao2021reachability}
\bibfield{author}{\bibinfo{person}{Yuyan Bao}, \bibinfo{person}{Guannan Wei},
  \bibinfo{person}{Oliver Bra{\v{c}}evac}, \bibinfo{person}{Yuxuan Jiang},
  \bibinfo{person}{Qiyang He}, {and} \bibinfo{person}{Tiark Rompf}.}
  \bibinfo{year}{2021}\natexlab{}.
\newblock \showarticletitle{Reachability types: tracking aliasing and
  separation in higher-order functional programs}.
\newblock \bibinfo{journal}{\emph{Proceedings of the ACM on Programming
  Languages}} \bibinfo{volume}{5}, \bibinfo{number}{OOPSLA}
  (\bibinfo{year}{2021}), \bibinfo{pages}{1--32}.
\newblock


\bibitem[Bauer and Pretnar(2014)]%
        {bauer2014effect}
\bibfield{author}{\bibinfo{person}{Andrej Bauer} {and} \bibinfo{person}{Matija
  Pretnar}.} \bibinfo{year}{2014}\natexlab{}.
\newblock \showarticletitle{An effect system for algebraic effects and
  handlers}.
\newblock \bibinfo{journal}{\emph{Logical methods in computer science}}
  \bibinfo{volume}{10} (\bibinfo{year}{2014}).
\newblock


\bibitem[Benton and Buchlovsky(2007)]%
        {benton2007exceptions}
\bibfield{author}{\bibinfo{person}{Nick Benton} {and} \bibinfo{person}{Peter
  Buchlovsky}.} \bibinfo{year}{2007}\natexlab{}.
\newblock \showarticletitle{{Semantics of an Effect Analysis for Exceptions}}.
  In \bibinfo{booktitle}{\emph{{TLDI}}}.
\newblock
\href{https://doi.org/10.1145/1190315.1190320}{doi:\nolinkurl{10.1145/1190315.1190320}}


\bibitem[Birkedal and Tofte(2001)]%
        {BIRKEDAL2001299}
\bibfield{author}{\bibinfo{person}{Lars Birkedal} {and} \bibinfo{person}{Mads
  Tofte}.} \bibinfo{year}{2001}\natexlab{}.
\newblock \showarticletitle{A constraint-based region inference algorithm}.
\newblock \bibinfo{journal}{\emph{Theoretical Computer Science}}
  \bibinfo{volume}{258}, \bibinfo{number}{1} (\bibinfo{year}{2001}),
  \bibinfo{pages}{299 -- 392}.
\newblock
\showISSN{0304-3975}
\href{https://doi.org/10.1016/S0304-3975(00)00025-6}{doi:\nolinkurl{10.1016/S0304-3975(00)00025-6}}


\bibitem[Bocchino et~al\mbox{.}(2009)]%
        {bocchino09}
\bibfield{author}{\bibinfo{person}{Robert~L. Bocchino, Jr.},
  \bibinfo{person}{Vikram~S. Adve}, \bibinfo{person}{Danny Dig},
  \bibinfo{person}{Sarita~V. Adve}, \bibinfo{person}{Stephen Heumann},
  \bibinfo{person}{Rakesh Komuravelli}, \bibinfo{person}{Jeffrey Overbey},
  \bibinfo{person}{Patrick Simmons}, \bibinfo{person}{Hyojin Sung}, {and}
  \bibinfo{person}{Mohsen Vakilian}.} \bibinfo{year}{2009}\natexlab{}.
\newblock \showarticletitle{{A Type and Effect System for Deterministic
  Parallel Java}}. In \bibinfo{booktitle}{\emph{OOPSLA}}.
\newblock
\href{https://doi.org/10.1145/1640089.1640097}{doi:\nolinkurl{10.1145/1640089.1640097}}


\bibitem[Boyapati et~al\mbox{.}(2002)]%
        {boyapati02}
\bibfield{author}{\bibinfo{person}{Chandrasekhar Boyapati},
  \bibinfo{person}{Robert Lee}, {and} \bibinfo{person}{Martin Rinard}.}
  \bibinfo{year}{2002}\natexlab{}.
\newblock \showarticletitle{{Ownership Types for Safe Programming: Preventing
  Data Races and Deadlocks}}. In \bibinfo{booktitle}{\emph{{OOPSLA}}}.
\newblock
\href{https://doi.org/10.1145/582419.582440}{doi:\nolinkurl{10.1145/582419.582440}}


\bibitem[Cousot(1997)]%
        {cousot1997types}
\bibfield{author}{\bibinfo{person}{Patrick Cousot}.}
  \bibinfo{year}{1997}\natexlab{}.
\newblock \showarticletitle{Types as abstract interpretations}. In
  \bibinfo{booktitle}{\emph{Proceedings of the 24th ACM SIGPLAN-SIGACT
  symposium on Principles of programming languages}}.
  \bibinfo{pages}{316--331}.
\newblock


\bibitem[Cousot(2024)]%
        {cousot2024calculational}
\bibfield{author}{\bibinfo{person}{Patrick Cousot}.}
  \bibinfo{year}{2024}\natexlab{}.
\newblock \showarticletitle{Calculational design of [in] correctness
  transformational program logics by abstract interpretation}.
\newblock \bibinfo{journal}{\emph{Proceedings of the ACM on Programming
  Languages}} \bibinfo{volume}{8}, \bibinfo{number}{POPL}
  (\bibinfo{year}{2024}), \bibinfo{pages}{175--208}.
\newblock


\bibitem[Cousot and Cousot(1979)]%
        {cousot1979systematic}
\bibfield{author}{\bibinfo{person}{Patrick Cousot} {and}
  \bibinfo{person}{Radhia Cousot}.} \bibinfo{year}{1979}\natexlab{}.
\newblock \showarticletitle{Systematic design of program analysis frameworks}.
  In \bibinfo{booktitle}{\emph{Proceedings of the 6th ACM SIGACT-SIGPLAN
  symposium on Principles of programming languages}}.
  \bibinfo{pages}{269--282}.
\newblock


\bibitem[Cousot and Cousot(1992)]%
        {cousot1992abstract}
\bibfield{author}{\bibinfo{person}{Patrick Cousot} {and}
  \bibinfo{person}{Radhia Cousot}.} \bibinfo{year}{1992}\natexlab{}.
\newblock \showarticletitle{Abstract interpretation frameworks}.
\newblock \bibinfo{journal}{\emph{Journal of logic and computation}}
  \bibinfo{volume}{2}, \bibinfo{number}{4} (\bibinfo{year}{1992}),
  \bibinfo{pages}{511--547}.
\newblock


\bibitem[Dagand et~al\mbox{.}(2018)]%
        {dagand2018foundations}
\bibfield{author}{\bibinfo{person}{Pierre-{\'E}variste Dagand},
  \bibinfo{person}{Nicolas Tabareau}, {and} \bibinfo{person}{{\'E }ric
  Tanter}.} \bibinfo{year}{2018}\natexlab{}.
\newblock \showarticletitle{Foundations of dependent interoperability}.
\newblock \bibinfo{journal}{\emph{Journal of Functional Programming}}
  \bibinfo{volume}{28} (\bibinfo{year}{2018}), \bibinfo{pages}{e9}.
\newblock


\bibitem[Dinsdale-Young et~al\mbox{.}(2013)]%
        {dinsdale2013views}
\bibfield{author}{\bibinfo{person}{Thomas Dinsdale-Young},
  \bibinfo{person}{Lars Birkedal}, \bibinfo{person}{Philippa Gardner},
  \bibinfo{person}{Matthew Parkinson}, {and} \bibinfo{person}{Hongseok Yang}.}
  \bibinfo{year}{2013}\natexlab{}.
\newblock \showarticletitle{Views: compositional reasoning for concurrent
  programs}. In \bibinfo{booktitle}{\emph{Proceedings of the 40th annual ACM
  SIGPLAN-SIGACT symposium on principles of programming languages}}.
  \bibinfo{pages}{287--300}.
\newblock


\bibitem[Flanagan and Abadi(1999)]%
        {objtyrace99}
\bibfield{author}{\bibinfo{person}{Cormac Flanagan} {and}
  \bibinfo{person}{Mart\'in Abadi}.} \bibinfo{year}{1999}\natexlab{}.
\newblock \showarticletitle{{Object Types against Races}}. In
  \bibinfo{booktitle}{\emph{{CONCUR}}}.
\newblock
\href{https://doi.org/10.1007/3-540-48320-9_21}{doi:\nolinkurl{10.1007/3-540-48320-9_21}}


\bibitem[Flanagan and Freund(2000)]%
        {rccjava00}
\bibfield{author}{\bibinfo{person}{Cormac Flanagan} {and}
  \bibinfo{person}{Stephen~N. Freund}.} \bibinfo{year}{2000}\natexlab{}.
\newblock \showarticletitle{{Type-Based Race Detection for Java}}. In
  \bibinfo{booktitle}{\emph{{PLDI}}}.
\newblock
\href{https://doi.org/10.1145/349299.349328}{doi:\nolinkurl{10.1145/349299.349328}}


\bibitem[Flanagan and Qadeer(2003a)]%
        {flanagan2003atomicity}
\bibfield{author}{\bibinfo{person}{Cormac Flanagan} {and} \bibinfo{person}{Shaz
  Qadeer}.} \bibinfo{year}{2003}\natexlab{a}.
\newblock \showarticletitle{A Type and Effect System for Atomicity}. In
  \bibinfo{booktitle}{\emph{Proceedings of the ACM SIGPLAN 2003 Conference on
  Programming Language Design and Implementation}} \emph{(\bibinfo{series}{PLDI
  '03})}. \bibinfo{publisher}{ACM}, \bibinfo{pages}{338--349}.
\newblock
\showISBNx{1-58113-662-5}
\href{https://doi.org/10.1145/781131.781169}{doi:\nolinkurl{10.1145/781131.781169}}


\bibitem[Flanagan and Qadeer(2003b)]%
        {flanagan2003tldi}
\bibfield{author}{\bibinfo{person}{Cormac Flanagan} {and} \bibinfo{person}{Shaz
  Qadeer}.} \bibinfo{year}{2003}\natexlab{b}.
\newblock \showarticletitle{Types for Atomicity}. In
  \bibinfo{booktitle}{\emph{Proceedings of the 2003 ACM SIGPLAN International
  Workshop on Types in Languages Design and Implementation}}
  \emph{(\bibinfo{series}{TLDI '03})}. \bibinfo{publisher}{ACM},
  \bibinfo{pages}{1--12}.
\newblock
\showISBNx{1-58113-649-8}
\href{https://doi.org/10.1145/604174.604176}{doi:\nolinkurl{10.1145/604174.604176}}


\bibitem[Gifford and Lucassen(1986)]%
        {gifford86}
\bibfield{author}{\bibinfo{person}{David~K. Gifford} {and}
  \bibinfo{person}{John~M. Lucassen}.} \bibinfo{year}{1986}\natexlab{}.
\newblock \showarticletitle{{Integrating Functional and Imperative
  Programming}}. In \bibinfo{booktitle}{\emph{Proceedings of the 1986 ACM
  Conference on LISP and Functional Programming}} \emph{(\bibinfo{series}{LFP
  '86})}.
\newblock
\href{https://doi.org/10.1145/319838.319848}{doi:\nolinkurl{10.1145/319838.319848}}


\bibitem[Gordon(2017)]%
        {gordon2017generic}
\bibfield{author}{\bibinfo{person}{Colin~S Gordon}.}
  \bibinfo{year}{2017}\natexlab{}.
\newblock \showarticletitle{A Generic Approach to Flow-Sensitive Polymorphic
  Effects}. In \bibinfo{booktitle}{\emph{31st European Conference on
  Object-Oriented Programming (ECOOP 2017)}}.
\newblock


\bibitem[Gordon(2020)]%
        {gordon2020lifting}
\bibfield{author}{\bibinfo{person}{Colin~S Gordon}.}
  \bibinfo{year}{2020}\natexlab{}.
\newblock \showarticletitle{Lifting Sequential Effects to Control Operators}.
  In \bibinfo{booktitle}{\emph{34th European Conference on Object-Oriented
  Programming (ECOOP 2020)}}.
\newblock


\bibitem[Gordon(2021)]%
        {gordon2021polymorphic}
\bibfield{author}{\bibinfo{person}{Colin~S. Gordon}.}
  \bibinfo{year}{2021}\natexlab{}.
\newblock \showarticletitle{Polymorphic Iterable Sequential Effect Systems}.
\newblock \bibinfo{journal}{\emph{ACM Transactions on Programming Languages and
  Systems (TOPLAS)}} (\bibinfo{year}{2021}).
\newblock


\bibitem[Gordon et~al\mbox{.}(2013)]%
        {gordon2013java}
\bibfield{author}{\bibinfo{person}{Colin~S Gordon}, \bibinfo{person}{Werner
  Dietl}, \bibinfo{person}{Michael~D Ernst}, {and} \bibinfo{person}{Dan
  Grossman}.} \bibinfo{year}{2013}\natexlab{}.
\newblock \showarticletitle{JavaUI: effects for controlling UI object access}.
  In \bibinfo{booktitle}{\emph{European Conference on Object-Oriented
  Programming}}. Springer, \bibinfo{pages}{179--204}.
\newblock


\bibitem[Gordon et~al\mbox{.}(2012a)]%
        {gordon2012static}
\bibfield{author}{\bibinfo{person}{Colin~S Gordon}, \bibinfo{person}{Michael~D
  Ernst}, {and} \bibinfo{person}{Dan Grossman}.}
  \bibinfo{year}{2012}\natexlab{a}.
\newblock \showarticletitle{Static lock capabilities for deadlock freedom}. In
  \bibinfo{booktitle}{\emph{Proceedings of the 8th ACM SIGPLAN workshop on
  Types in language design and implementation}}. \bibinfo{pages}{67--78}.
\newblock


\bibitem[Gordon et~al\mbox{.}(2012b)]%
        {gordon2012uniqueness}
\bibfield{author}{\bibinfo{person}{Colin~S Gordon}, \bibinfo{person}{Matthew~J
  Parkinson}, \bibinfo{person}{Jared Parsons}, \bibinfo{person}{Aleks
  Bromfield}, {and} \bibinfo{person}{Joe Duffy}.}
  \bibinfo{year}{2012}\natexlab{b}.
\newblock \showarticletitle{Uniqueness and reference immutability for safe
  parallelism}. In \bibinfo{booktitle}{\emph{Proceedings of the ACM
  international conference on Object oriented programming systems languages and
  applications}}. \bibinfo{pages}{21--40}.
\newblock


\bibitem[Gordon and Yun(2023)]%
        {gordon2023error}
\bibfield{author}{\bibinfo{person}{Colin~S Gordon} {and}
  \bibinfo{person}{Chaewon Yun}.} \bibinfo{year}{2023}\natexlab{}.
\newblock \showarticletitle{Error Localization for Sequential Effect Systems}.
  In \bibinfo{booktitle}{\emph{International Static Analysis Symposium}}.
  Springer, \bibinfo{pages}{343--370}.
\newblock


\bibitem[Gosling et~al\mbox{.}(2014)]%
        {gosling2014java}
\bibfield{author}{\bibinfo{person}{James Gosling}, \bibinfo{person}{Bill Joy},
  \bibinfo{person}{Guy~L Steele}, \bibinfo{person}{Gilad Bracha}, {and}
  \bibinfo{person}{Alex Buckley}.} \bibinfo{year}{2014}\natexlab{}.
\newblock \bibinfo{booktitle}{\emph{{The Java Language Specification: Java SE 8
  Edition}}}.
\newblock \bibinfo{publisher}{Pearson Education}.
\newblock


\bibitem[Hofmann and Chen(2014)]%
        {hofmann2014abstract}
\bibfield{author}{\bibinfo{person}{Martin Hofmann} {and} \bibinfo{person}{Wei
  Chen}.} \bibinfo{year}{2014}\natexlab{}.
\newblock \showarticletitle{Abstract interpretation from B{\"u}chi automata}.
  In \bibinfo{booktitle}{\emph{Proceedings of the Joint Meeting of the
  Twenty-Third EACSL Annual Conference on Computer Science Logic (CSL) and the
  Twenty-Ninth Annual ACM/IEEE Symposium on Logic in Computer Science (LICS)}}.
  \bibinfo{pages}{1--10}.
\newblock


\bibitem[Holdermans and Hage(2010)]%
        {holdermans2010polyvariant}
\bibfield{author}{\bibinfo{person}{Stefan Holdermans} {and}
  \bibinfo{person}{Jurriaan Hage}.} \bibinfo{year}{2010}\natexlab{}.
\newblock \showarticletitle{Polyvariant flow analysis with higher-ranked
  polymorphic types and higher-order effect operators}. In
  \bibinfo{booktitle}{\emph{Proceedings of the 15th ACM SIGPLAN international
  conference on Functional programming}}. \bibinfo{pages}{63--74}.
\newblock


\bibitem[Iva{\v{s}}kovi{\'c} and Mycroft(2020)]%
        {ivaskovic2020graded}
\bibfield{author}{\bibinfo{person}{Andrej Iva{\v{s}}kovi{\'c}} {and}
  \bibinfo{person}{Alan Mycroft}.} \bibinfo{year}{2020}\natexlab{}.
\newblock \showarticletitle{A graded Monad for deadlock-free concurrency
  (functional pearl)}. In \bibinfo{booktitle}{\emph{Proceedings of the 13th ACM
  SIGPLAN International Symposium on Haskell}}. \bibinfo{pages}{17--30}.
\newblock


\bibitem[Iva{\v{s}}ković et~al\mbox{.}(2020)]%
        {ivaskovic2020dataflow}
\bibfield{author}{\bibinfo{person}{Andrej Iva{\v{s}}ković},
  \bibinfo{person}{Alan Mycroft}, {and} \bibinfo{person}{Dominic Orchard}.}
  \bibinfo{year}{2020}\natexlab{}.
\newblock \showarticletitle{{Data-Flow Analyses as Effects and Graded Monads}}.
  In \bibinfo{booktitle}{\emph{5th International Conference on Formal
  Structures for Computation and Deduction (FSCD 2020)}}.
  \bibinfo{pages}{15:1--15:23}.
\newblock
\showISBNx{978-3-95977-155-9}
\showISSN{1868-8969}
\href{https://doi.org/10.4230/LIPIcs.FSCD.2020.15}{doi:\nolinkurl{10.4230/LIPIcs.FSCD.2020.15}}


\bibitem[Jouvelot and Gifford(1991)]%
        {jouvelot1991algebraic}
\bibfield{author}{\bibinfo{person}{Pierre Jouvelot} {and}
  \bibinfo{person}{David Gifford}.} \bibinfo{year}{1991}\natexlab{}.
\newblock \showarticletitle{Algebraic reconstruction of types and effects}. In
  \bibinfo{booktitle}{\emph{Proceedings of the 18th ACM SIGPLAN-SIGACT
  symposium on Principles of programming languages}}.
  \bibinfo{pages}{303--310}.
\newblock


\bibitem[Kappelmann(2023)]%
        {kappelmann2023transport}
\bibfield{author}{\bibinfo{person}{Kevin Kappelmann}.}
  \bibinfo{year}{2023}\natexlab{}.
\newblock \showarticletitle{Transport via partial galois connections and
  equivalences}. In \bibinfo{booktitle}{\emph{Asian Symposium on Programming
  Languages and Systems}}. Springer, \bibinfo{pages}{225--245}.
\newblock


\bibitem[Kennedy(1996)]%
        {kennedy1996type}
\bibfield{author}{\bibinfo{person}{Andrew~J. Kennedy}.}
  \bibinfo{year}{1996}\natexlab{}.
\newblock \showarticletitle{Type Inference and Equational Theories}.
\newblock \bibinfo{journal}{\emph{Technical Report LIX/RR/96/09, LIX}}
  (\bibinfo{year}{1996}).
\newblock


\bibitem[Koskinen and Terauchi(2014)]%
        {Koskinen14LTR}
\bibfield{author}{\bibinfo{person}{Eric Koskinen} {and} \bibinfo{person}{Tachio
  Terauchi}.} \bibinfo{year}{2014}\natexlab{}.
\newblock \showarticletitle{Local Temporal Reasoning}. In
  \bibinfo{booktitle}{\emph{Proceedings of the Joint Meeting of the
  Twenty-Third EACSL Annual Conference on Computer Science Logic (CSL) and the
  Twenty-Ninth Annual ACM/IEEE Symposium on Logic in Computer Science (LICS)}}
  (Vienna, Austria) \emph{(\bibinfo{series}{CSL-LICS '14})}.
  \bibinfo{publisher}{ACM}, \bibinfo{address}{New York, NY, USA}, Article
  \bibinfo{articleno}{59}, \bibinfo{numpages}{10}~pages.
\newblock
\showISBNx{978-1-4503-2886-9}
\href{https://doi.org/10.1145/2603088.2603138}{doi:\nolinkurl{10.1145/2603088.2603138}}


\bibitem[Kozen(1994)]%
        {KOZEN1994366}
\bibfield{author}{\bibinfo{person}{D. Kozen}.} \bibinfo{year}{1994}\natexlab{}.
\newblock \showarticletitle{A Completeness Theorem for Kleene Algebras and the
  Algebra of Regular Events}.
\newblock \bibinfo{journal}{\emph{Information and Computation}}
  \bibinfo{volume}{110}, \bibinfo{number}{2} (\bibinfo{year}{1994}),
  \bibinfo{pages}{366 -- 390}.
\newblock
\showISSN{0890-5401}
\href{https://doi.org/10.1006/inco.1994.1037}{doi:\nolinkurl{10.1006/inco.1994.1037}}


\bibitem[Kozen(1997)]%
        {kozen1997kleene}
\bibfield{author}{\bibinfo{person}{Dexter Kozen}.}
  \bibinfo{year}{1997}\natexlab{}.
\newblock \showarticletitle{Kleene algebra with tests}.
\newblock \bibinfo{journal}{\emph{ACM Transactions on Programming Languages and
  Systems (TOPLAS)}} \bibinfo{volume}{19}, \bibinfo{number}{3}
  (\bibinfo{year}{1997}), \bibinfo{pages}{427--443}.
\newblock
\href{https://doi.org/10.1145/256167.256195}{doi:\nolinkurl{10.1145/256167.256195}}


\bibitem[Lucassen and Gifford(1988)]%
        {lucassen88}
\bibfield{author}{\bibinfo{person}{J.~M. Lucassen} {and} \bibinfo{person}{D.~K.
  Gifford}.} \bibinfo{year}{1988}\natexlab{}.
\newblock \showarticletitle{{Polymorphic Effect Systems}}. In
  \bibinfo{booktitle}{\emph{{Proceedings of the 15th ACM SIGPLAN-SIGACT
  Symposium on Principles of Programming Languages (POPL)}}}.
\newblock
\href{https://doi.org/10.1145/73560.73564}{doi:\nolinkurl{10.1145/73560.73564}}


\bibitem[MacNeille(1937)]%
        {macneille1937partially}
\bibfield{author}{\bibinfo{person}{Holbrook~Mann MacNeille}.}
  \bibinfo{year}{1937}\natexlab{}.
\newblock \showarticletitle{Partially ordered sets}.
\newblock \bibinfo{journal}{\emph{Trans. Amer. Math. Soc.}}
  \bibinfo{volume}{42}, \bibinfo{number}{3} (\bibinfo{year}{1937}),
  \bibinfo{pages}{416--460}.
\newblock


\bibitem[Madsen et~al\mbox{.}(2023)]%
        {madsen2023fast}
\bibfield{author}{\bibinfo{person}{Magnus Madsen}, \bibinfo{person}{Jaco Van~de
  Pol}, {and} \bibinfo{person}{Troels Henriksen}.}
  \bibinfo{year}{2023}\natexlab{}.
\newblock \showarticletitle{Fast and efficient boolean unification for
  Hindley-Milner-style type and effect systems}.
\newblock \bibinfo{journal}{\emph{Proceedings of the ACM on Programming
  Languages}} \bibinfo{volume}{7}, \bibinfo{number}{OOPSLA2}
  (\bibinfo{year}{2023}), \bibinfo{pages}{516--543}.
\newblock


\bibitem[Marino and Millstein(2009)]%
        {marino09}
\bibfield{author}{\bibinfo{person}{Daniel Marino} {and} \bibinfo{person}{Todd
  Millstein}.} \bibinfo{year}{2009}\natexlab{}.
\newblock \showarticletitle{{A Generic Type-and-Effect System}}. In
  \bibinfo{booktitle}{\emph{{TLDI}}}.
\newblock
\href{https://doi.org/10.1145/1481861.1481868}{doi:\nolinkurl{10.1145/1481861.1481868}}


\bibitem[Nicola et~al\mbox{.}(2025)]%
        {nicola2025abstract}
\bibfield{author}{\bibinfo{person}{Mihai Nicola}, \bibinfo{person}{Chaitanya
  Agarwal}, \bibinfo{person}{Eric Koskinen}, {and} \bibinfo{person}{Thomas
  Wies}.} \bibinfo{year}{2025}\natexlab{}.
\newblock \showarticletitle{Abstract Interpretation of Temporal Safety Effects
  of Higher Order Programs}.
\newblock \bibinfo{journal}{\emph{Proceedings of the ACM on Programming
  Languages}} \bibinfo{volume}{9}, \bibinfo{number}{OOPSLA2}
  (\bibinfo{year}{2025}), \bibinfo{pages}{2511--2539}.
\newblock


\bibitem[Nielson and Nielson(1993)]%
        {nielson1993cml}
\bibfield{author}{\bibinfo{person}{Flemming Nielson} {and}
  \bibinfo{person}{Hanne~Riis Nielson}.} \bibinfo{year}{1993}\natexlab{}.
\newblock \showarticletitle{From CML to process algebras}. In
  \bibinfo{booktitle}{\emph{International Conference on Concurrency Theory
  (CONCUR)}}. Springer, \bibinfo{pages}{493--508}.
\newblock
\href{https://doi.org/10.1007/3-540-57208-2_34}{doi:\nolinkurl{10.1007/3-540-57208-2_34}}


\bibitem[Nielson and Nielson(1994)]%
        {nielson1994constraints}
\bibfield{author}{\bibinfo{person}{Flemming Nielson} {and}
  \bibinfo{person}{Hanne~Riis Nielson}.} \bibinfo{year}{1994}\natexlab{}.
\newblock \showarticletitle{Constraints for polymorphic behaviours of
  concurrent ML}. In \bibinfo{booktitle}{\emph{International Conference on
  Constraints in Computational Logics}}. Springer, \bibinfo{pages}{73--88}.
\newblock


\bibitem[Pitts(1987)]%
        {pitts1987polymorphism}
\bibfield{author}{\bibinfo{person}{Andrew~M Pitts}.}
  \bibinfo{year}{1987}\natexlab{}.
\newblock \showarticletitle{Polymorphism is set theoretic, constructively}. In
  \bibinfo{booktitle}{\emph{Category Theory and Computer Science: Edinburgh,
  UK, September 7--9, 1987 Proceedings}}. Springer, \bibinfo{pages}{12--39}.
\newblock


\bibitem[Reynolds(1984)]%
        {reynolds1984polymorphism}
\bibfield{author}{\bibinfo{person}{John~C Reynolds}.}
  \bibinfo{year}{1984}\natexlab{}.
\newblock \showarticletitle{Polymorphism is not set-theoretic}. In
  \bibinfo{booktitle}{\emph{International Symposium on Semantics of Data
  Types}}. Springer, \bibinfo{pages}{145--156}.
\newblock


\bibitem[Saffrich and Thiemann(2022)]%
        {saffrich2022relating}
\bibfield{author}{\bibinfo{person}{Hannes Saffrich} {and}
  \bibinfo{person}{Peter Thiemann}.} \bibinfo{year}{2022}\natexlab{}.
\newblock \showarticletitle{Relating functional and imperative session types}.
\newblock \bibinfo{journal}{\emph{Logical Methods in Computer Science}}
  \bibinfo{volume}{18} (\bibinfo{year}{2022}).
\newblock


\bibitem[Schmidt(2012)]%
        {schmidt2012inverse}
\bibfield{author}{\bibinfo{person}{David~A Schmidt}.}
  \bibinfo{year}{2012}\natexlab{}.
\newblock \showarticletitle{Inverse-limit and topological aspects of abstract
  interpretation}.
\newblock \bibinfo{journal}{\emph{Theoretical Computer Science}}
  \bibinfo{volume}{430} (\bibinfo{year}{2012}), \bibinfo{pages}{23--42}.
\newblock


\bibitem[Sekiyama and Unno(2023)]%
        {sekiyama2023temporal}
\bibfield{author}{\bibinfo{person}{Taro Sekiyama} {and}
  \bibinfo{person}{Hiroshi Unno}.} \bibinfo{year}{2023}\natexlab{}.
\newblock \showarticletitle{Temporal Verification with Answer-Effect
  Modification: Dependent Temporal Type-and-Effect System with Delimited
  Continuations}.
\newblock \bibinfo{journal}{\emph{Proceedings of the ACM on Programming
  Languages}} \bibinfo{volume}{7}, \bibinfo{number}{POPL}
  (\bibinfo{year}{2023}), \bibinfo{pages}{2079--2110}.
\newblock


\bibitem[Sekiyama and Unno(2025)]%
        {sekiyama2025algebraic}
\bibfield{author}{\bibinfo{person}{Taro Sekiyama} {and}
  \bibinfo{person}{Hiroshi Unno}.} \bibinfo{year}{2025}\natexlab{}.
\newblock \showarticletitle{Algebraic Temporal Effects: Temporal Verification
  of Recursively Typed Higher-Order Programs}.
\newblock \bibinfo{journal}{\emph{Proceedings of the ACM on Programming
  Languages}} \bibinfo{volume}{9}, \bibinfo{number}{POPL}
  (\bibinfo{year}{2025}), \bibinfo{pages}{2306--2336}.
\newblock


\bibitem[Skalka(2008)]%
        {Skalka2008}
\bibfield{author}{\bibinfo{person}{Christian Skalka}.}
  \bibinfo{year}{2008}\natexlab{}.
\newblock \showarticletitle{Types and trace effects for object orientation}.
\newblock \bibinfo{journal}{\emph{Higher-Order and Symbolic Computation}}
  \bibinfo{volume}{21}, \bibinfo{number}{3} (\bibinfo{year}{2008}),
  \bibinfo{pages}{239--282}.
\newblock
\href{https://doi.org/10.1007/s10990-008-9032-6}{doi:\nolinkurl{10.1007/s10990-008-9032-6}}


\bibitem[Skalka et~al\mbox{.}(2020)]%
        {skalka2020types}
\bibfield{author}{\bibinfo{person}{Christian Skalka}, \bibinfo{person}{David
  Darais}, \bibinfo{person}{Trent Jaeger}, {and} \bibinfo{person}{Frank
  Capobianco}.} \bibinfo{year}{2020}\natexlab{}.
\newblock \showarticletitle{Types and abstract interpretation for authorization
  hook advice}. In \bibinfo{booktitle}{\emph{2020 IEEE 33rd Computer Security
  Foundations Symposium (CSF)}}. IEEE, \bibinfo{pages}{139--152}.
\newblock


\bibitem[Skalka et~al\mbox{.}(2008)]%
        {skalka2008types}
\bibfield{author}{\bibinfo{person}{Christian Skalka}, \bibinfo{person}{Scott
  Smith}, {and} \bibinfo{person}{David Van~Horn}.}
  \bibinfo{year}{2008}\natexlab{}.
\newblock \showarticletitle{Types and trace effects of higher order programs}.
\newblock \bibinfo{journal}{\emph{Journal of Functional Programming}}
  \bibinfo{volume}{18}, \bibinfo{number}{2} (\bibinfo{year}{2008}).
\newblock


\bibitem[Suenaga(2008)]%
        {suenaga2008type}
\bibfield{author}{\bibinfo{person}{Kohei Suenaga}.}
  \bibinfo{year}{2008}\natexlab{}.
\newblock \showarticletitle{Type-based deadlock-freedom verification for
  non-block-structured lock primitives and mutable references}. In
  \bibinfo{booktitle}{\emph{Asian Symposium on Programming Languages and
  Systems}}. Springer, \bibinfo{pages}{155--170}.
\newblock
\href{https://doi.org/10.1007/978-3-540-89330-1_12}{doi:\nolinkurl{10.1007/978-3-540-89330-1_12}}


\bibitem[Tang and Jouvelot(1994)]%
        {tang1994separate}
\bibfield{author}{\bibinfo{person}{Yan~Mei Tang} {and} \bibinfo{person}{Pierre
  Jouvelot}.} \bibinfo{year}{1994}\natexlab{}.
\newblock \showarticletitle{Separate abstract interpretation for control-flow
  analysis}. In \bibinfo{booktitle}{\emph{International Symposium on
  Theoretical Aspects of Computer Software}}. Springer,
  \bibinfo{pages}{224--243}.
\newblock


\bibitem[Tate(2013)]%
        {tate2013sequential}
\bibfield{author}{\bibinfo{person}{Ross Tate}.}
  \bibinfo{year}{2013}\natexlab{}.
\newblock \showarticletitle{The Sequential Semantics of Producer Effect
  Systems}. In \bibinfo{booktitle}{\emph{{POPL} '13: Proceedings of the 40th
  annual {ACM} { SIGPLAN-SIGACT} symposium on Principles of Programming
  Languages}}. \bibinfo{publisher}{{ACM}}.
\newblock
\href{https://doi.org/10.1145/2429069.2429074}{doi:\nolinkurl{10.1145/2429069.2429074}}


\bibitem[Timany et~al\mbox{.}(2024)]%
        {timany2024logical}
\bibfield{author}{\bibinfo{person}{Amin Timany}, \bibinfo{person}{Robbert
  Krebbers}, \bibinfo{person}{Derek Dreyer}, {and} \bibinfo{person}{Lars
  Birkedal}.} \bibinfo{year}{2024}\natexlab{}.
\newblock \showarticletitle{A logical approach to type soundness}.
\newblock \bibinfo{journal}{\emph{J. ACM}} \bibinfo{volume}{71},
  \bibinfo{number}{6} (\bibinfo{year}{2024}), \bibinfo{pages}{1--75}.
\newblock


\bibitem[van Dooren and Steegmans(2005)]%
        {vanDooren2005}
\bibfield{author}{\bibinfo{person}{Marko van Dooren} {and}
  \bibinfo{person}{Eric Steegmans}.} \bibinfo{year}{2005}\natexlab{}.
\newblock \showarticletitle{Combining the Robustness of Checked Exceptions with
  the Flexibility of Unchecked Exceptions Using Anchored Exception
  Declarations}. In \bibinfo{booktitle}{\emph{Proceedings of the 20th Annual
  ACM SIGPLAN Conference on Object-oriented Programming, Systems, Languages,
  and Applications}} \emph{(\bibinfo{series}{OOPSLA '05})}.
  \bibinfo{publisher}{ACM}, \bibinfo{pages}{455--471}.
\newblock
\showISBNx{1-59593-031-0}
\href{https://doi.org/10.1145/1094811.1094847}{doi:\nolinkurl{10.1145/1094811.1094847}}


\end{thebibliography}

\end{document}